\documentclass{article}

\usepackage{amsmath}
\usepackage{amssymb}
\usepackage{amsthm}
\usepackage{cleveref}

\usepackage{microtype}
\usepackage{caption}
\usepackage[title]{appendix}
\usepackage{tikz}
\usetikzlibrary{decorations.pathreplacing,calligraphy}
\usepackage{geometry}
\usepackage[skip=1.5pt]{parskip}

\usepackage{natbib}

\usepackage{graphicx}
\usepackage{subcaption}

\usepackage{verbatim}
\usepackage{comment}

\usepackage[export]{adjustbox}[2011/08/13]

\usepackage[colorinlistoftodos,disable]{todonotes}

\geometry{margin=1.5in}

\newcommand{\expyg}{y_{G}}
\newcommand{\expyr}{y_{R}}

\newcommand{\expecpayoff}[1]{\mathbf{E}_{\epsilon_{R}} [\pi_{#1}(\beta; y_G,y_R)]}
\newcommand{\prob}[1]{\mathbb{P}\left[{#1}\right]}
\newcommand{\expec}[2]{\mathbb{E}_{#1}\left[{#2}\right]}

\newcommand{\lbone}{l_{\beta_1}}
\newcommand{\lbtwo}{l_{\beta_2}}

\newcommand{\betaint}{{\beta_{R}}^{-},{\beta_{R}}^{+}}

\crefname{assumption}{assumption}{assumptions}

\crefname{res}{result}{result}
\newtheorem{thm}{Theorem}
\newtheorem{cor}{Corollary}
\newtheorem{prop}{Proposition}
\crefname{prop}{proposition}{propositions}
\newtheorem{lemma}{Lemma}
\crefname{lemma}{lemma}{lemmas}
\newtheorem{defn}{Definition}
\newtheorem{rem}{Remark}

\title{Peace in the Face of Uncertainty: Resource Allocation with Stochastic Armaments}

\author{Sarah Taylor\thanks{%
Taylor: Jesus College, University of Cambridge (email: \texttt{srt44@cam.ac.uk}). We are grateful to  Alastair Langtry for his detailed and insightful comments, and Matt Elliott and Vivek Roy-Chowdhury for their helpful comments. Any remaining errors are the sole responsibility of the authors. This work was supported by the Economic and Social Research Council [award reference ES/J500033/1], and the Keynes Fund.  }}
\date{\today}

\begin{document}
        \maketitle
	\begin{abstract}	
            This paper examines a government's strategic resource allocation choices when facing an opposing group whose military power is uncertain. We investigate how this uncertainty affects the government’s decision to divide resources in a way that either guarantees peace, despite unresolved uncertainty, or risks conflict.  We find that under low uncertainty, the government prefers distributions which ensure peace, while under high uncertainty, they are willing to risk war.  When uncertainty is low, the government's allocation is decreasing in uncertainty. When uncertainty is high it is increasing. The latter leads to an increased probability of fighting and falling total welfare.
        \end{abstract}
        \section{Introduction}\label{sec:introduction}
Ensuring robust and long-term peace is a matter of humanitarian and economic importance.  Indeed, the overwhelming evidence that open conflict hinders development highlights the essential role which successful and stable peace agreements play in facilitating prosperity.\footnote{See \cite{abadie_economic_2003}, \cite{collier_2004_africa}, \cite{cerra_2008_growth}, \cite{suliman_2009_human}, \cite{chamarbagwala_2011_human}, \cite{efobi_2016_terrorism}, and see \cite{blattman_2010_civil} for an overview.  See \cite{collier_economic_1999} for a theoretical foundation.}   The division of resources to prevent war critically hinges on the relative military strengths of the groups involved.  However, there has been limited progress in studying such agreements in settings where at least one group's military strength is uncertain. This is despite numerous studies evidencing that protracted conflict often features sources of uncertainty, such as rough terrain (which makes army sizes difficult to quantify);\footnote{ 'Rough terrain' refers to environments such as dense vegetation or mountainous areas. Numerous studies show that rough terrain is an important indicator of civil conflict: \cite{buhaug_2002_geography}, \cite{FearonJAMESD2003EIaC}, \cite{buhaug_2009_geography}, and \cite{carter_2019_places}.} covert interventions from neighbouring states;\footnote{See \cite{MichalopoulosStelios2016TLEo} and \cite{konig_2017_networks}.} and variable resolutions of the collective-action problem when co-ordinating informally-organised rebel militias.\footnote{\cite{olson_logic_1971} shows that collective action fails in a setting of public good provision.}

The political instability caused by the threat of war can be almost as damaging as war itself - for example through capital flight or the creation of refugees. However, unresolved uncertainty regarding military strength can make it difficult to rule out the possibility of open conflict as a sudden change could render previous agreements moot. We therefore aim to answer two questions: when can peace be guaranteed through the appropriate division of resources despite the presence of unresolved uncertainty,\footnote{We focus on transfers that guarantee peace without outside intervention. This is reasonable as maintaining foreign peace-keeping initiatives is expensive and sometimes unpopular with the intervening country's voters.}  and how does uncertainty impact incentives of regimes to guarantee peace or risk war? 

To this end, we study a simple game wherein a government and a rebel group attempt to agree on dividing up the resources of a country under the threat of war.  
The government proposes a resource transfer to the rebels that will result in peace if accepted. Resources transferred could include land rights, state budget spent on the rebels, or development projects in the area where the rebel group lives. Either party may unilaterally reject the proposed transfer and initiate a war. However, both groups are ex-ante unsure about the probability with which they will win a conflict.\footnote{Note that this differs from the outcome of the conflict being probabilistic, where either group may win with a fixed probability.} We allow the rebels' arms to receive a shock to their armaments after the transfer is proposed, and model uncertainty by the range of possible shocks.

Our work yields results on how uncertain military strength erodes trust and eliminates the existence of peace that is robust to any possible realisation of the shock. We determine the set of transfers that will always ensure peace regardless of how the uncertainty resolves itself, and call these the set of \emph{peace-guaranteeing transfers}. Unsurprisingly we find a critical level of uncertainty beyond which the set of peace-guaranteeing transfers is empty. This bound persists even when both groups possess equal ex-ante armaments and shocks to armaments have an expected value of zero. 

We analyse the government's optimal choice of proposed transfers when uncertainty is low enough that a peace-guaranteeing transfer is always feasible. We find that they do not always prefer peace-guaranteeing transfers. In particular, we find stark differences in the preferences for peace of low- and high-uncertainty regimes. For small levels of uncertainty, the government guarantees peace and proposes transfers which are robust to any shock realisation. This occurs because deterring war creates a surplus through the preventing capital degradation, which the government can then allocate to itself. However, when uncertainty increases above a critical level the government shifts toward risky strategies that may lead to conflict. This occurs because it is unlikely that the rebels will honour the proposed split (if they receive a large positive shock), or be able to enforce it (if they receive a large negative shock). 

Our results therefore crucially classify the level of uncertainty at which regimes can ensure robust peace without outside interventions. A policymaker can trust the government to maintain peace in scenarios where uncertainty is relatively low. However, in high-uncertainty environments, external interventions may be required to ensure peace. 

We then show a surprising non-monotonicity in the government's proposed transfers. When uncertainty is sufficiently low so that the government prefers a peace-guaranteeing transfer, the size of the transfer offered to the rebel group increases in uncertainty. However, once uncertainty is past the threshold for the government to abandon peace-guaranteeing transfers, the size of the transfer offered decreases in uncertainty. Thus, a small amount of uncertainty can ensure a better share of resources for the rebels, whilst a large amount of uncertainty can cause the government to keep most of the resources (in the absence of conflict). 

We also find that, once the government abandons peace-guaranteeing transfers, increasing uncertainty has a material negative impact on key outcomes. In particular, it increases the probability of war and decreases total welfare. These results highlight the importance of guaranteeing peace that is robust to shock outcomes. 

Our work emphasises the importance of policy measures which reduce uncertainty in maintaining peace. Examples of these are the monitoring requirements in the 2015 Iran nuclear deal, Cold War anti-proliferation treaties, and chemical weapons inspections. Our results also explain why conflict is so difficult to eliminate when monitoring is difficult - for example, in wars with numerous actors. However, we also suggest that reducing uncertainty is not valuable once it is below the level at which the government prefers a peace-guaranteeing transfer.

The non-monotonicity in the government's strategy also paints a bleak picture in settings with extreme uncertainty. As uncertainty increases, the government will hoard resources.  This will only be disrupted by rebellions which occur when rebels receive a sufficiently large positive shock to their conflict capacity and challenge the status quo. This prediction matches empirical findings which show that war in the developing world is often precipitated by extreme inequality between groups.\footnote{See \cite{stewart_2008_horizontal}, \cite{cerra_2008_growth}, \cite{cederman_2011_horizontal}.} 


\emph{Literature Review:} Uncertainty under various dimensions has long been used in rationalist explanations for war, which explores the conditions under which peace or conflict emerges (See \cite{fearon_1995_rationalist}, \cite{fearon_2004_some}). Our work contributes by providing another rationalist explanation for why a regime may endogenously prefer war to guaranteed peace but specifically distinguishes between low- and high-uncertainty regimes by showing how the former prefers guaranteed peace whilst the latter opts for riskier strategies which may spark war. This complements the existing literature which has used alternative mechanisms to explain why some agents may prefer open conflict. This includes bounded rationality, information asymmetry (\cite{brito_1985_conflict}, \cite{azam_2003_civil}, \cite{sanchez_2012_bargaining}) and non-destruction of resources in post-war periods (\cite{garfinkel_2000_conflict}).

A large body of literature also addresses the role of commitment problems in generating open conflict.\footnote{ This also features in numerous influential papers in the political science literature, see \cite{powell_2006_war}, \cite{debs_2014_known}.}  Our study shows that governments are more willing to risk war under large uncertainty, as they cannot guarantee stability with a peaceful resource split. \cite{powell_2013_monopolizing} suggests that governments may become more powerful in the absence of conflict, making it difficult for them to commit to refraining from future oppression. \cite{acemoglu_why_2000} assumes that the output of the country (and so the size of future transfers) is uncertain, whilst \cite{acemoglu_2001_theory} assumes the destructiveness of war is stochastic. Our work studies a complementary variation of this where war breaks out due to a change in circumstance rather than because both parties would like to stop fighting but cannot credibly commit to future transfers- although we can reproduce many of the major results of this literature if we view the scale of the shock as parametrising the scale of the commitment problem. This has been widely discussed in the political science literature but not the economics literature.(See \cite{werner_1999_precarious}, \cite{smith_2004_bargaining}, and \cite{werner_2005_making}.)Our work also departs from the previous literature which has only studied how uncertainty influences the ``negotiation gap" - i.e. the interval in which peace-guaranteeing transfers are possible.


The existing work closest to ours is \cite{garfinkel_2021_self}, which investigates ensuring peace-guaranteeing transfers via coalition-proof equilibria in a single-period model where agents simultaneously arm and decide on war.  However, this focuses on arming decisions rather than resource distributions (which are assumed to be fixed). Our work complements this by analysing optimal transfers, which allows us to answer questions about endogenous resource allocation by showing that the behaviour of low- and high-uncertainty regimes differs concerning their preferences for peace. \cite{BesleyTimothy2011TLOP} also studies how the government cannot commit to not using the state purse to fund repression. However, they do not separate arming decisions from fighting decisions and are thus unable to explain what levels of arms differentials would be immune to a commitment problem.\footnote{Belligerents will often choose a large military capacity with no intention of using it. The aim is for the large arms capacity to act as a conflict deterrent rather than to use it in active conflict --- nuclear arms are perhaps the clearest example of this.} 

There has been some work on arming decisions and their resulting impact on resource allocation in the face of uncertainty in other dimensions such as the number of players (\cite{myerson_population_2006}), which subset of the players will compete (\cite{lim_contests_2009}), the cost of conflict (\cite{wasser_incomplete_2013}), and the propagation of shocks on a conflict network (\cite{xu_2022_equilibrium}). Our complements this by studying directly addresses the conditions under which uncertainty armaments impacts resource allocations and in turn leads to war.


Our work therefore contributes to expanding our understanding of how ratio-form contest success functions behave when conflict effort is uncertain. This may be useful beyond its application in this paper as contest success functions are widely used in a variety of settings such as the economics of advertising, tournaments, sports, oligopolistic competition and rent-seeking (See \cite{garfinkel_chapter_2007} for an overview.)

Our paper proceeds as follows. \Cref{sec:model} presents our model and \Cref{sec:results} provides an exposition and discussion of our results. Specifically, \Cref{sec:threshold} studies when peace-guaranteeing transfers exist, and \Cref{sec:opt_beta} studies the government's optimal choice of proposed transfer. \Cref{sec:conclusion} concludes and discusses avenues for future work. The longer proofs have been omitted from the main body of the text and may be found in \Cref{sec:appendix}.

        \section{Model}\label{sec:model}
    	We aim to create a parsimonious two-stage game which models negotiations between two groups under the threat of conflict when there is uncertainty around the relative military strength of the two groups involved. We outline our model below and then discuss its interpretation and some of our modelling decisions.

\subsection{Model Description}
\vspace{1ex}\noindent\emph {Set Up:} We consider a country whose people are split into two groups. We shall call these groups the government, $G$, and the rebels, $R$. The country is equipped with some exogenously determined resources that can be split amongst the two groups for consumption.  We normalise the value of these resources to $1$. Resources allocated to one group for consumption cannot be consumed by the other --- consumption may be split this way because the country is split by ethnic, religious or political divides. 

The government originally controls all the country's resources. They can propose a keep a portion $\beta \in [0,1]$ of the country's resources to themselves, with the remaining $1-\beta$ being transferred to the rebel group. Either group may reject the split and go to war to seize control of all the country's resources. If either group decides to fight then the country descends into war.

Each group is equipped with a pre-determined level of arms, which we shall call $y_G \in \mathbb{R}$ and $y_R \in \mathbb{R}$ for the governments and rebels respectively. The rebels' are subject to an additive shock $\epsilon_R$, where $\epsilon_R$ is drawn from a continuous distribution which has bounded support on the interval $\left[\underline{a},\overline{a}\right] \in \mathbb{R}$.\footnote{As we show below, negotiated peace is impossible when $|\overline{a}-\underline{a}|$ is sufficiently large when preferences are linear. Thus peace would never materialise if the groups were to expect an unbounded shock. However, long-term and robust peace does in reality. It is thus plausible that agents believe shocks are expected to be finite.}

There are no information asymmetries. Both groups know each other's expected conflict efforts, the range of the shock, $[\underline{a},\overline{a}]$, as well as the realisation of the shock $\epsilon_{R}$.\footnote{We may relax this assumption as our results for \Cref{sec:opt_beta} show that the government will always propose a value of $\beta$ so that they will never declare war, regardless of the realisation of $\epsilon_{R}$. Thus, we may produce the same results if we instead assume that the government does not observe the realisation of $\epsilon_{R}$.}

\vspace{1ex}\noindent\emph{Timing: }The timing of the two groups' decisions proceeds as follows:
\begin{enumerate}
	\item The government proposes a transfer $\beta \in [0,1]$;\label{item:split_prop}
	\item The shock $\epsilon_{R}$ is realised;
	\item Each group decides to either accept or reject the proposed split:\label{item:fight_accept}
	\begin{enumerate}
		\item If both groups accept, then each receives their proposed resource share;\footnote{In other words, the government receives $\beta$ and the rebels receive $1-\beta$.} 
		\item If either (or both) of the groups reject, then the country descends into war, and the winner takes all of the state resources.\footnote{This is equivalent to letting the winner choose a new $\mathbf{\beta}$.}
	\end{enumerate}
\end{enumerate}

\vspace{1ex}\noindent\emph{Destructiveness of War:}  We allow for war to be destructive. Thus some of the country's resources are degraded during open conflict by a parameter $\alpha \in [0,1]$. If war breaks out, the resources claimed by the victor at the end of the conflict will be $\alpha$. 

\vspace{1ex} \noindent \emph{Contest Success Function:}
In the event of a conflict, the winner is determined stochastically using a logit-specification Tullock contest success function. We shall denote each group's probability of winning the conflict after the shock has been realised by $p_{i} \in [0,1]$, where $i \in \{G,R\}$:
\begin{equation}\label{eq:pwin}
	p_{G}= \frac{\exp{\expyg}}{\exp{\expyg} + \exp{(\expyr + \epsilon_{R})}} \hspace{3mm}\text{ and } \hspace{3mm} p_{R}= \frac{\exp{\expyr+\epsilon_{R}}}{\exp{\expyg} + \exp{(\expyr + \epsilon_{R})}}.
\end{equation}
The logit specification is widely used in the empirical economics of conflict literature.

\vspace{1ex} \noindent \emph{Utility from fighting:} We assume that utility from consumption is linear. Agents therefore exhibit no risk aversion.

\subsection{Discussion}
Our model allows for the interpretation of $\beta$ to be flexible. It could be a long-term, status-quo distribution of resources, or a proposition made during negotiations take place during an ongoing civil conflict. Our model explains either setting equally well as long as agents view previous conflict and its associated destruction as sunk costs and do not factor them into future conflict decisions.

A key feature of our model is that it is a one-shot game. For this reason, it applies particularly well to settings which feature substantial uncertainty and which lack a functioning democratic process, and so the only mechanism for renegotiation is through violence. For example, over half the countries on the African continent are either considered to be non-democracies or ``hybrid regimes", which feature regular electoral frauds.\footnote{\cite{IDEAS_2021_state}} Regime changes in these countries tend to happen through coups and uprisings.\footnote{Africa has seen 220 coup attempts in the past 50 years, and 45 out of Africa's 54 countries have experienced a coup. (Source: https://projects.voanews.com/african-coups/)}

Our model also applies to settings where comprehensive peace negotiations between belligerents are costly to arrange.\footnote{This is especially true when there are numerous alliances with auxiliary actors --- see \cite{braithwaite_2023_muddying}.} For example, the Nigerian Civil War claimed two million lives and raged for three years without a ceasefire despite international efforts to arrange one. There are several reasons why peace negotiations are so difficult to co-ordinate. Making concessions to an enemy may be associated with reputational costs --- especially if the motivation for war is ideological (\cite{smith_2018_stopping}). Belligerents are often also over-optimistic that a short and decisive victory will materialise(\cite{johnson_2004_overconfidence}). There are numerous examples where even sophisticated military powers have made this mistake: the 1965 US invasion of Vietnam and the 2003 invasion of Iran were initiated with expectations of a decisive win. In a similar vein, the Russian army expected its ongoing invasion of Ukraine to conclude in a matter of days.\footnote{See \cite{slantchev_2003_principle} for a theoretical treatment of informational asymmetries in war.}

We also assume that each group's expected conflict effort is fixed. This is primarily to direct our focus towards the role of uncertainty in relative levels of armaments in determining proposed resource distributions.  Moreover, in many settings, it may be reasonable to assume that belligerents cannot choose their arms flexibly in the short run. Thus, when $\beta$ is proposed, neither group can change their military capacity to improve their bargaining power.

The reason for shocking the rebels' arms is that we imagine that the government controls a standing army. The rebels, however, belong to an informally organised group of allies whose members may be randomly constrained in the capacity to assist in conflict. We could also view the rebels as facing a public goods provision problem when organising their conflict efforts. Thus some agents will want to free-ride instead of participating in the conflict.\footnote{See \cite{olson_logic_1971} for the seminal theoretical discussion of the collective action problem}  This collective action problem could resolve stochastically, resulting in uncertainty about the rebels' achievable level of arms.

Notice that the government may propose a split of resources, they are in no way bound to honour it once the shock is realised. Thus, if they find that the rebels are left sufficiently weak after the realisation of the shock they may be tempted to renege on their proposed split and seize all of the country's resources. Similarly, the rebels cannot commit to accepting even a very favourable transfer as a large positive shock to their conflict effort will tempt them to overturn the proposed split using violence.\footnote{There are many examples of this in history. For example, the Rwandan genocide was began after the Hutu president was killed in a plane crash, thus strenghtening extremist Hutu rhetoric and sparking the war. Similarly, the Bosnian War began shortly after the death of Josip Tito, an autocratic strongman who had kept peace during his reign.}


        \section{Results}\label{sec:results}
    	This section begins with solving for each group's transfer acceptance decision at the second stage of the game after the shock has been resolved. We classify the set of proposed transfers which will be accepted regardless of the realisation of the shock $\epsilon_{R}$. We call this the set of \emph{peace-guaranteeing transfers} and determine at what levels of uncertainty it is non-empty. We then go on to solve for the government's optimal proposed transfer in the first stage of the game --- both when the set of peace-guaranteeing solutions is non-empty and when it is empty.

\subsection{Uncertainty Destroys the Possibility of Robust Peace}\label{sec:threshold}

We start at the second stage of the negotiations where a proposed transfer has been made and the shock has been realized (step \ref{item:fight_accept} in the model). Both groups need to decide whether to accept their original share of resources or attempt to seize all of the country's resources through war. 

Either group is willing to declare war if their expected payoff from doing so exceeds the share of resources they receive from the proposed transfer. Since each group's payoff from fighting depends on their strength relative to their opponents, their best response will depend on the realisation of the shock $\epsilon_{R}$. In particular, the decision to fight is a threshold decision. The government wants to declare war if the realisation of $\epsilon_{R}$ is small, so that the rebels are weak, whilst the rebels want to fight if the realisation of $\epsilon_{R}$ large so that they are strong.

The existence of these thresholds follows from the linearity of utility and the functional form of the contest success function, rather than the distribution of the shock $\epsilon_{R}$ (not just a bounded one). It therefore exists for any shock distribution.

Let the function $BR_i(\epsilon_{R}): [\underline{a},\overline{a}] \rightarrow \{\text{Accept},\text{Fight}\}$ denote group  $i$'s best response function to the realisation of $\epsilon_{R}$. 

\begin{prop}\label{prop:fightthresh}
	Let $\epsilon_{R}$ follow any distribution. The best response for each group is given by:
         \begin{enumerate}
             \item $ BR_{G}(\epsilon_{R}) = \text{Fight} \iff  \beta \leq \alpha   \text{, and } \epsilon_{R} \leq \log(\alpha/\beta -1)+(\expyg - \expyr)$; and
		 \item$BR_{R}(\epsilon_{R}) = \text{Fight} \iff  1-\beta \leq \alpha \text{, and }   \epsilon_{R} \geq \log(\alpha/(1-\beta) -1)+(\expyg - \expyr)$.
         \end{enumerate}
\end{prop}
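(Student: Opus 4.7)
The plan is to work directly from the pointwise best-response definition: at the second stage, each group has a realised $\epsilon_{R}$ in hand and compares its deterministic peace payoff against its expected war payoff $\alpha p_{i}$ evaluated at that $\epsilon_{R}$, accepting when the former weakly exceeds the latter and fighting otherwise. Because this is a conditional comparison, the argument is purely algebraic and makes no use of the distribution of $\epsilon_{R}$; this is what delivers the ``any distribution'' clause in the statement.

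For the government, I would substitute the logit expression for $p_{G}$ from \eqref{eq:pwin} into the fight inequality $\alpha p_{G} \geq \beta$, clear the common denominator, and rearrange to the equivalent form $(\alpha - \beta)\exp(\expyg) \geq \beta \exp(\expyr + \epsilon_{R})$. This already exposes the two-part structure of the claim. If $\beta > \alpha$, the left-hand side is non-positive while the right-hand side is strictly positive, so war is never preferred: this is the necessary condition $\beta \leq \alpha$. If $\beta \leq \alpha$, divide through by $\beta \exp(\expyr)$, take logs, and isolate $\epsilon_{R}$ to obtain $\epsilon_{R} \leq \log(\alpha/\beta - 1) + (\expyg - \expyr)$.

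The rebels' case is symmetric. Starting from $\alpha p_{R} \geq 1 - \beta$ and substituting the logit form gives $(\alpha - (1-\beta))\exp(\expyr + \epsilon_{R}) \geq (1-\beta)\exp(\expyg)$. The same sign argument yields the necessary condition $1 - \beta \leq \alpha$, and in the interior case the inequality becomes a \emph{lower} threshold on $\epsilon_{R}$ rather than an upper one, because $p_{R}$ is strictly increasing in $\epsilon_{R}$ whereas $p_{G}$ was strictly decreasing. Each algebraic step in both derivations is reversible, so the ``iff'' structure of the proposition is automatic rather than requiring a separate monotonicity argument.

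The only point that needs care is the sign bookkeeping when dividing by $\alpha - \beta$ or $\alpha - (1-\beta)$, since either factor can be non-positive, and it is precisely when they are that the ``accept regardless of $\epsilon_{R}$'' branch is active. Keeping those corner cases explicit is what produces the two-conjunct form (``$\beta \leq \alpha$ \emph{and} the threshold condition'') in the statement. Beyond that, the proof is routine algebra and I do not foresee any substantive obstacle.
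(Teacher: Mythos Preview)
Your proposal is correct and follows essentially the same route as the paper: both start from the pointwise comparison $\alpha p_i \geq \beta_i$, extract the necessary condition on $\beta$ (you via the sign of $\alpha-\beta$ after clearing the denominator, the paper via the cruder bound $p_i \leq 1$), and then solve the logit inequality for $\epsilon_R$ under that condition. The only cosmetic difference is that the paper keeps the fraction and manipulates $\exp(y_R-y_G)\exp(\epsilon_R)$ directly rather than first cross-multiplying, but the algebra and the logical structure are identical.
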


Notice that the government may set a value of $\beta$ which they find unacceptably low once the shock has resolved, and then go to war over it.

According to \Cref{prop:fightthresh}, once $\beta$ is fixed and the shock is resolved, there are two cases when peace occurs. The first case occurs when war is very destructive ($\alpha<1/2$), and $\beta > \alpha$ and $1-\beta > \alpha$.  The second case is when war is not overly destructive but $\epsilon_{R}$ is realised in a ``Goldilocks" interval of not being too large nor too small. In this case, there is an upper and lower bound on the realisation of the shock for which war does not occur. This leads us to the following remark:
\begin{rem}
	Suppose $\alpha\geq1/2$. Then there is an upper and lower bound on the realisation of $\epsilon_{R}$ for which war will not occur: \begin{equation*}
		\log\left(\frac{\alpha}{\beta} -1\right)+(\expyg - \expyr)	\leq \epsilon_{R} \leq -\log\left(\frac{\alpha}{(1-\beta)}-1\right) + (\expyg - \expyr).
	\end{equation*}
\end{rem}
 Clearly, the range of shocks which allow for peace becomes larger as war becomes more destructive (i.e. $\alpha$ decreases). The effect of increasing the share awarded to the government is ambiguous - this is because increasing the share awarded to either party decreases their willingness to fight, but also increases the willingness of the other group to fight. 

\subsubsection*{Robust Peace is not Always Possible}
\label{sec:peace_gaurantee}
The threat of war can be almost as damaging as war itself. It is therefore useful to determine the set of transfers which will always ensure peace regardless of how the uncertainty resolves itself post-negotiations. We call these the set of \emph{peace-guaranteeing transfers}. 

The existence of such peace-gauranteeing transfers hinges on the fact that $\epsilon_{R}$ is bounded. \Cref{prop:fightthresh} implies that for every realised value of $\varepsilon_{R}$, there exist upper and lower bounds on the values of $\beta$ for which each group will accept it.  Thus, there may be sufficiently small or large values of $\beta$ for which either group's threshold to accept or reject $\beta$ will be satisfied, regardless of the realisation of the shock $\epsilon_{R}$. 

Intuitively, suppose the share of resources awarded to a group is sufficiently large. In that case, they will never want to fight, even when the shock resolves at its extreme so that the group is maximally strong relative to their opponent. The converse also holds for sufficiently small shares of resources. This observation leads us to \Cref{prop:beta_thresholds} below.

\begin{prop}\label{prop:beta_thresholds}
    There are ${\beta_{i}}^{+}$ and ${\beta_{i}}^{-}$  such that, for any $\epsilon_{R}\in [\underline{a},\overline{a}]$ 
    \begin{enumerate}
    		\item $BR_{G}(\epsilon_{R}; \beta)= \text{Accept}$  exactly when $\beta \geq {\beta_G}^{+}$ \label{item:beta_1_star} 
    		\item $BR_{G}(\epsilon_{R}; \beta)=\text{Fight}$  exactly when $\beta < {\beta_G}^{-}$\label{item:beta_1_hat}
    		\item $BR_{R}(\epsilon_{R}; \beta)=\text{Accept}$ exactly when $\beta \leq {\beta_R}^{-}$\label{item:beta_2_star}
    		\item $BR_{R}(\epsilon_{R}; \beta)=\text{Fight}$  exactly when $\beta > {\beta_R}^{+}$. \label{item:beta_2_hat}
    \end{enumerate}
\end{prop}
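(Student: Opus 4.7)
My plan is to apply \Cref{prop:fightthresh} and exploit the fact that each group's best response, as a function of $\epsilon_R$, is monotone: the government fights only when $\epsilon_R$ falls below a $\beta$-dependent threshold (and $\beta\le\alpha$), and the rebels fight only when $\epsilon_R$ rises above a $\beta$-dependent threshold (and $1-\beta\le\alpha$). Because $\epsilon_R$ is confined to the bounded interval $[\underline{a},\overline{a}]$, the event ``accepts for every realisation of $\epsilon_R$'' reduces to acceptance at the single worst-case shock for that group --- $\epsilon_R=\underline{a}$ for the government (who most wants to fight when the rebels are weakest) and $\epsilon_R=\overline{a}$ for the rebels. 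Dually, ``fights for every realisation'' reduces to wanting to fight even at the most favourable shock ($\overline{a}$ and $\underline{a}$ respectively). This reduction is what converts quantification over $\epsilon_R$ into a single inequality in $\beta$.

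Concretely, I will start with the government. By \Cref{prop:fightthresh}, the government accepts for every $\epsilon_R\in[\underline{a},\overline{a}]$ iff either $\beta>\alpha$ or $\log(\alpha/\beta-1)+(\expyg-\expyr)<\underline{a}$. Rearranging the second inequality in $\beta$ yields $\beta>\alpha/\bigl(1+e^{\underline{a}-(\expyg-\expyr)}\bigr)$, so setting $\beta_G^{+}=\alpha/\bigl(1+e^{\underline{a}-(\expyg-\expyr)}\bigr)$ gives part \ref{item:beta_1_star}. Replacing $\underline{a}$ with $\overline{a}$ and asking instead that the fight condition hold at the best shock delivers $\beta_G^{-}=\alpha/\bigl(1+e^{\overline{a}-(\expyg-\expyr)}\bigr)$, which is part \ref{item:beta_1_hat}. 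The rebels' case is perfectly symmetric: taking the fight-threshold characterisation in $\epsilon_R$ from \Cref{prop:fightthresh} (equivalently, the upper endpoint of the peace interval in the remark), substituting $\epsilon_R=\overline{a}$ for ``always accept'' and $\epsilon_R=\underline{a}$ for ``always fight'', and solving for $\beta$ produces $\beta_R^{-}$ and $\beta_R^{+}$ of the form $1-\alpha/\bigl(1+e^{(\expyg-\expyr)-\overline{a}}\bigr)$ and $1-\alpha/\bigl(1+e^{(\expyg-\expyr)-\underline{a}}\bigr)$ respectively.

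The main obstacle to tidying this up is the case split baked into \Cref{prop:fightthresh}: the fighting threshold in $\epsilon_R$ only exists when $\beta\leq\alpha$ (respectively $1-\beta\leq\alpha$), so I must verify that the $\beta$-threshold extracted from the $\epsilon_R$-inequality lies on the correct side of $\alpha$ (or $1-\alpha$) so that the case analysis does not create spurious exceptions or an ambiguous middle region. I will dispatch this by observing that $\alpha/(1+e^{x})<\alpha$ for every $x\in\mathbb{R}$, so the derived $\beta_G^{\pm}$ lie strictly below $\alpha$ while $\beta_R^{\pm}$ lie strictly above $1-\alpha$; consequently, the ``trivially never fights'' branch of each best response is absorbed into the single threshold without overlap or gap, and the four equivalences follow directly from monotonicity of the $\epsilon_R$-threshold in $\beta$.
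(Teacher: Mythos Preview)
Your proposal is correct and follows essentially the same route as the paper: both reduce the universal quantifier over $\epsilon_R$ to the single worst-case endpoint of $[\underline{a},\overline{a}]$, solve the resulting inequality for $\beta$, and then absorb the ``$\beta>\alpha$'' (resp.\ ``$1-\beta>\alpha$'') branch by checking that the derived threshold lies below $\alpha$ (resp.\ above $1-\alpha$). The paper packages the same ideas via the auxiliary thresholds $t_G,t_R$ and a separate monotonicity lemma, whereas you solve the $\epsilon_R$-inequality for $\beta$ directly; your version is a bit more economical but not substantively different, and your explicit formulas for $\beta_G^{\pm},\beta_R^{\pm}$ match those the paper records.
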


Notice that \Cref{prop:beta_thresholds} partitions the possible values for $\beta$ into regions where each group's decision is probabilistically determined by the realisation of $\varepsilon_{R}$, and where it is unchanged by $\varepsilon_{R}$. We illustrate this partition of the strategy space in \Cref{fig:beta_part} below. 

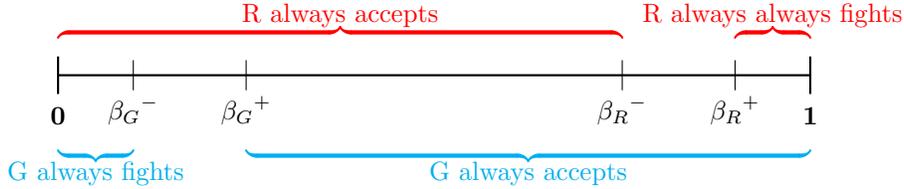
\begin{figure}[h]
	\centering
		\begin{tikzpicture}
			\draw [thick] (0,0) -- (10,0);
			\draw [thick] (0,-0.25) -- (0,0.25);
			\node[label = {[label distance = -9mm]{\textbf{0}}}] at (0,0) {};
			\draw [thick] (10,-0.25) -- (10,0.25);
			\node[label = {[label distance = -9mm]{\textbf{1}}}] at (10,0) {};
			
			\draw(1,-0.2) -- (1,0.2);
			\node[label = {[label distance = -9mm]{${\beta_{G}}^{-}$}}] at (1,0) {};
			\draw [pen colour = {cyan}, ultra thick,decorate, decoration = {calligraphic brace,mirror, amplitude =4pt}] (0,-1) -- (1,-1) node[pos = 0.5, below, cyan]{G always fights};
			
			\draw(9,-0.2) -- (9,0.2);
			\node[label = {[label distance = -9mm]{${\beta_{R}}^{+}$}}] at (9,0) {};
			\draw [pen colour = {red}, ultra thick,decorate, decoration = {calligraphic brace, amplitude =4pt}] (9,0.5) -- (10,0.5) node[pos = 0.5, above, red]{R always always fights};
			
			\draw (2.5,-0.2) -- (2.5,0.2);
			\node[label = {[label distance = -9mm]{${\beta_{G}}^{+}$}}] at (2.5,0) {};
			\draw [pen colour = {cyan}, ultra thick,decorate, decoration = {calligraphic brace,mirror, amplitude =4pt}] (2.5,-1) -- (10,-1) node[pos = 0.5, below, cyan]{G always accepts};
			
			\draw (7.5,-0.2) -- (7.5,0.2);
			\node[label = {[label distance = -9mm]{${\beta_{R}}^{-}$}}] at (7.5,0) {};
			\draw [pen colour = {red}, ultra thick,decorate, decoration = {calligraphic brace, amplitude =4pt}] (0,0.5) -- (7.5,0.5)  node[pos = 0.5, above, red]{R always accepts};
		\end{tikzpicture}
	\caption{An illustration of how the strategy space for $\beta$ is partitioned by fighting outcomes.}\label{fig:beta_part}
\end{figure}

Now, suppose that the parameters in the model are such that ${\beta_{R}}^{-}>{\beta_{G}}^{+}$. Then the interval $[{\beta_{G}}^{+},{\beta_{R}}^{-}]$ forms the set transfers which both groups will definitely accept and so peace after negotiations is guaranteed, regardless of the realisation of the shock $\epsilon_{R}$. 

\begin{rem}
    If $\beta_{G}^{+} > \beta_{R}^{-}$, then there is an interval of $\beta$ which guarantees peace.
\end{rem}

However, it turns out that such an interval does not always exist. If the difference between $\underline{a}$ and $\overline{a}$ is too large, then it may be impossible to find mutually compatible values of $\beta_{G}^{+}$ and $\beta_{R}^{-}$. decreasing as $\overline{a}$ increases. With this in mind, we introduce the following definition of uncertainty:

\begin{defn}
    We define \emph{uncertainty}, denoted $a$, to be given by $ \overline{a}-\underline{a}$. 
\end{defn}

\Cref{thm:peace_threshold} shows if uncertainty is too large, guaranteed peace is impossible.

\begin{thm} \label{thm:peace_threshold}
Suppose $\alpha \geq 1/2$. There exists a critical value of uncertainty, denoted $a_{crit}$, above which there exists no values of $\beta$ that can guarantee peace.
\end{thm}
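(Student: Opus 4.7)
The plan is to invert the indifference conditions from \Cref{prop:fightthresh} to obtain closed forms for $\beta_G^+$ and $\beta_R^-$, then show that the non-emptiness of the peace-guaranteeing interval $[\beta_G^+,\beta_R^-]$ fails once $a$ exceeds a finite threshold.

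First I would characterise $\beta_G^+$ as the $\beta$ at which the government is just willing to accept under its worst case, $\epsilon_R=\underline{a}$: solving $\underline{a}=\log(\alpha/\beta-1)+(y_G-y_R)$ gives $\beta_G^+=\alpha/(1+\exp(\underline{a}-(y_G-y_R)))$. An analogous calculation for the rebels, using the upper fight-threshold from the remark following \Cref{prop:fightthresh} evaluated at $\epsilon_R=\overline{a}$, gives $\beta_R^-=1-\alpha/(1+\exp((y_G-y_R)-\overline{a}))$. A peace-guaranteeing transfer exists iff $\beta_G^+\le\beta_R^-$, which rearranges to
\[
\frac{\alpha}{1+\exp(\underline{a}-(y_G-y_R))} + \frac{\alpha}{1+\exp((y_G-y_R)-\overline{a})} \le 1.
\]
Termwise monotonicity shows the left-hand side is strictly decreasing in $\underline{a}$ and strictly increasing in $\overline{a}$, so spreading the shock support in either direction tightens the constraint. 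Taking $\underline{a}\to-\infty$ and $\overline{a}\to+\infty$, each term approaches $\alpha$, and the left-hand side approaches $2\alpha\ge 1$ by hypothesis. By continuity the inequality must fail for sufficiently large $a=\overline{a}-\underline{a}$, and the intermediate value theorem delivers a critical $a_\text{crit}$ beyond which $[\beta_G^+,\beta_R^-]$ is empty.

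The main subtlety, and the step I would treat most carefully, is that the left-hand side depends on $\underline{a}$ and $\overline{a}$ separately rather than only on $a=\overline{a}-\underline{a}$. I would handle this by fixing $a$ and optimising over the midpoint $m=(\underline{a}+\overline{a})/2$: a first-order argument exploiting the evenness of $\sigma'$ about zero shows that the tightest configuration is the one symmetric about $y_G-y_R$, i.e.\ $m=y_G-y_R$. At this symmetric configuration the condition collapses to $\sigma(a/2)\le 1/(2\alpha)$, yielding the explicit value $a_\text{crit}=2\log\bigl(1/(2\alpha-1)\bigr)$, which is finite whenever $\alpha>1/2$ and diverges to $+\infty$ at the boundary $\alpha=1/2$.
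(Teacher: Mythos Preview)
Your proof is correct and takes a genuinely different route from the paper's. The paper rewrites $\beta_R^- - \beta_G^+ \ge 0$, after clearing denominators, as a concave-down quadratic in $e^{a/2}$ whose coefficients depend on $\alpha$ and the midpoint of the support, and then argues by locating the positive root of that quadratic. You instead exploit the sigmoidal structure directly: each summand in your displayed inequality is monotone in its own endpoint and tends to $\alpha$ as the support is spread, so the left-hand side tends to $2\alpha>1$ (strictly when $\alpha>1/2$), and the monotonicity you have already established plus the intermediate value theorem yield the threshold. This is more transparent than the paper's algebra and makes the role of the hypothesis $\alpha\ge 1/2$ visible. Your second paragraph goes further than the theorem requires: identifying $m=y_G-y_R$ as the midpoint at which the constraint binds first and extracting $a_{\mathrm{crit}}=2\log(1/(2\alpha-1))$ effectively recovers, via a different argument, the content of the paper's later \Cref{cor:a_crit_symmetric}, which derives a closed form only under the extra assumption of a symmetric support. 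One small remark worth making explicit in your write-up: the termwise monotonicity already gives that for \emph{every} fixed midpoint the left-hand side is strictly increasing in $a$ with limit $2\alpha$, so you obtain a midpoint-dependent threshold in full generality, and the symmetric case you single out is simply where that threshold is smallest.
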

The intuition behind the above result lies in the comparative statics of $a$ on $\beta{G}^{+}$ and $\beta_{R}^{-}$. This is because as $\underline{a}$ gets smaller, the maximum advantage the government has over the rebels increases, and so  $\beta_{G}^{+}$ increases. The converse holds for $\beta_{R}^{-}$. Thus, an increasing range on $\varepsilon_{R}$ means that the most extreme advantage either group can hope to have over their enemies becomes larger, and so the share of resources they require to not fight is larger. So $a$ increases, $\beta_{G}^{+} \rightarrow \beta_{R}^{-}$ from above until eventually $\beta_{G}^{+}>\beta_{R}^{-}$, and the interval of $\beta$ for which peace is guaranteed shrinks until it eventually disappears.

\Cref{fig:beta_conv} illustrates the intuition of the proof for \Cref{thm:peace_threshold}. For all of the below plots $y_{G}-y_{R}=0$ and $\alpha=0.7$. Here, $t_{G}$ is the upper bound described in \Cref{prop:fightthresh}, and $t_{R}$ is the lower bound described in \Cref{prop:fightthresh}. The region to the right where $\underline{a}$ intersects $t_{G}$ is precisely the set of transfers which the government will accept for all realisations of $\varepsilon_{R}$. The region to the left of where $t_{R}$ intersects $\overline{a}$ is the set of transfers the rebels will never oppose.

\begin{figure}
	\begin{subfigure}[b]{0.49\textwidth}
		\centering
		\includegraphics[width=\textwidth]{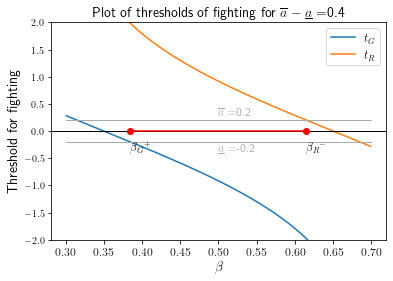}
		\caption{With $a=0.2$, $[{\beta_{G}}^{+},{\beta_{R}}^{-}]$ begins to shrink}\end{subfigure}
	\hfill
	\begin{subfigure}[b]{0.49\textwidth}
		\centering
		\includegraphics[width=\textwidth]{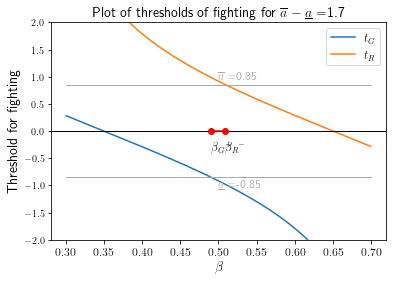}
		\caption{With $a=0.75$, $[{\beta_{G}}^{+},{\beta_{R}}^{-}]$ begins to shrink}
	\end{subfigure}
	\begin{subfigure}[b]{0.49\textwidth}
		\centering
		\includegraphics[width=\textwidth]{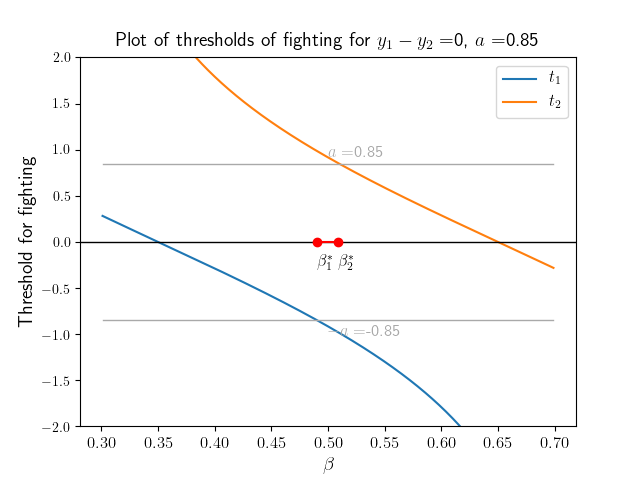}
		\caption{With $a=0.85$, ${\beta_{G}}^{+}\to{\beta_{R}}^{-}$}
	\end{subfigure}
	\hfill
	\begin{subfigure}[b]{0.49\textwidth}
		\centering
		\includegraphics[width=\textwidth]{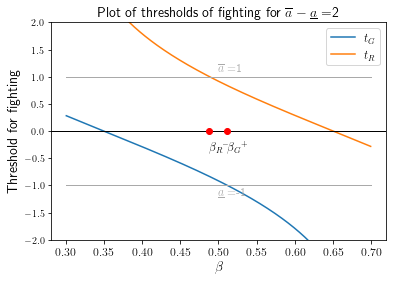}
		\caption{With $a=1$, ${\beta_{G}}^{+}>{\beta_{R}}^{-}$}
	\end{subfigure}
	\caption{As uncertainty increases, the interval of peace-guaranteeing $\beta$ narrows and eventually vanishes. }\label{fig:beta_conv}
\end{figure}

\Cref{thm:peace_threshold} tells us that when uncertainty is low, so that $a<a_{crit}$ it is possible for negotiations to result in lasting peace.  However, the story is more bleak when uncertainty is too large - the uncertainty around the rebels' arms means that committing to peace in this case is impossible as both parties know that, there are certain realisations of the shock for which their odds of winning a war will be high enough that fighting is worthwhile. This is true even when participants have identical ex-ante arms. 

\begin{cor}\label{cor:beta_vals}
  The share of resources awarded to each group from which their decision to accept or fight is certain is:
    \begin{enumerate}
        \item Increasing $\alpha$, and thus decreasing in the destructiveness of war; 
        \item Increasing in their pre-shocked arms advantage over their opponent ($|y_{G}-y_{R}|$).
    \end{enumerate}
    Moreover, the government's shares are decreasing in $\underline{a},\overline{a}$, whilst the rebels' shares are increasing.  
\end{cor}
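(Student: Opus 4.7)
The plan is to obtain closed-form expressions for each of the four critical values ${\beta_G}^{+}, {\beta_G}^{-}, {\beta_R}^{+}, {\beta_R}^{-}$ introduced in Proposition 2, by combining the fight thresholds from Proposition 1 with the requirement that the fight condition either hold or fail for every $\epsilon_R \in [\underline{a}, \overline{a}]$, and then read the three comparative statics off by inspection. For the government, always accepting requires its fight trigger $\log(\alpha/\beta - 1) + (y_G - y_R)$ to lie weakly below $\underline{a}$ (so the smallest possible shock still does not provoke a fight), whereas always fighting requires this trigger to lie weakly above $\overline{a}$. Solving each boundary equation gives
\[
{\beta_G}^{+} = \frac{\alpha}{1 + \exp\bigl(\underline{a} - (y_G - y_R)\bigr)}, \qquad {\beta_G}^{-} = \frac{\alpha}{1 + \exp\bigl(\overline{a} - (y_G - y_R)\bigr)}.
\]
Applying the analogous argument to the rebels' trigger, and expressing the result in terms of R's share $1-\beta$, yields
\[
1 - {\beta_R}^{-} = \frac{\alpha}{1 + \exp\bigl((y_G - y_R) - \overline{a}\bigr)}, \qquad 1 - {\beta_R}^{+} = \frac{\alpha}{1 + \exp\bigl((y_G - y_R) - \underline{a}\bigr)}.
\]
All four critical shares are of the form $\alpha \cdot \sigma(\,\cdot\,)$ with $\sigma$ the logistic function of an argument linear in the parameters, which makes every monotonicity in the corollary visible at a glance.

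Given these closed forms, I would verify the three claims directly. Claim~1 is immediate from proportionality to $\alpha$: every critical share is strictly increasing in $\alpha$ and hence strictly decreasing in the destructiveness $1-\alpha$. For claim~2, write $d = y_G - y_R$; both ${\beta_G}^{\pm}$ have $-d$ in the exponent and are therefore strictly increasing in $d$, i.e., in the government's signed arms advantage, while $1-{\beta_R}^{\pm}$ have $+d$ in the exponent and so are strictly increasing in $-d$, i.e., in the rebels' signed arms advantage; the absolute value $|y_G - y_R|$ in the statement is a compact way of packaging these two symmetric monotonicities. For the final claim, $\underline{a}$ enters the exponent of ${\beta_G}^{+}$ with positive sign and the exponent of $1-{\beta_R}^{+}$ with negative sign, so ${\beta_G}^{+}$ is strictly decreasing in $\underline{a}$ while $1-{\beta_R}^{+}$ is strictly increasing in it; the symmetric observation for $\overline{a}$ handles ${\beta_G}^{-}$ and $1-{\beta_R}^{-}$.

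The argument presents no substantive obstacle once the closed forms are in hand. The only care required is bookkeeping: the rebels' ``share'' is $1-\beta$ rather than $\beta$, so their monotonicities must be phrased in terms of $1-{\beta_R}^{\pm}$; and ``arms advantage'' in claim~2 should be read as each group's signed difference over its opponent, so that the $|y_G - y_R|$ in the statement is the common-magnitude shorthand for the two symmetric directions.
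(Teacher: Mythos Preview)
Your proposal is correct and follows essentially the same approach as the paper: the closed-form expressions you derive coincide with those obtained in the proof of Proposition~\ref{prop:beta_thresholds} (and restated as a separate corollary in the appendix), and the paper likewise obtains the comparative statics by direct inspection of these formulas. If anything, your treatment is more explicit in spelling out each monotonicity than the paper, which simply asserts that the closed forms ``readily give rise to'' the stated comparative statics.
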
 
These comparative statics make intuitive sense. For the first statement, it is clear that as war becomes increasingly less destructive, fighting becomes an increasingly rewarding exercise, so both parties need a larger share of resources to deter fighting. The second two statements follow from the fact that stronger groups (whether before or after the realisation of the shock) need a larger share of resources as the probability that they will win a conflict is higher.

\Cref{thm:peace_threshold} together with \Cref{prop:beta_thresholds} also implies that when uncertainty is low declarations of war are always one-sided. Specifically, when $a < a_{crit}$ there is no value of $\beta$ at which both groups are willing to declare war after the shock has been realised. This is because we always have that ${\beta_{G}}^{-} \leq {\beta_{G}}^{+}$, and similarly ${\beta_{R}}^{-} \leq {\beta_{R}}^{+}$.  However, the interval wherein the government is willing to declare war is given by $[0, \beta_{G}^{-})$, whilst the rebels are willing to declare war for $({\beta_{R}}^{+},1]$. Since, in that case of $a < a_{crit}$, we have ${\beta_{R}}^{+} > \beta_{G}^{-}$, the intersection of these two intervals is empty.

\Cref{thm:peace_threshold} suggests that resource allocation alone can maintain peace in regions where uncertainty around relative armaments is low. 

We may, in fact, find closed-form solutions for the bounds in \Cref{prop:beta_thresholds}, which we detail in the appendix.\footnote{See \Cref{cor:beta_vals} for further details.} These closed-form solutions readily give rise to the following comparative statics.

When our support for $\epsilon_{R}$ is a symmetric interval we can find a closed-form solution for $a_{crit}$, as shown in \Cref{cor:a_crit_symmetric} below. 
\begin{cor}\label{cor:a_crit_symmetric}
Suppose that the set of possible shocks is symmetric about zero. Then there is a closed-form expression for $a_{crit}$. Moreover, $a_{crit}$ is decreasing in $\alpha$ and increasing in $|y_{G}-y_{R}|$.
\end{cor}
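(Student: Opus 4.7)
My plan is to specialise the closed-form expressions for the peace thresholds $\beta_{G}^{+}$ and $\beta_{R}^{-}$ (which follow from inverting \Cref{prop:fightthresh} at $\epsilon_{R}=\underline{a}$ and $\epsilon_{R}=\overline{a}$ respectively, as detailed in the appendix alongside \Cref{cor:beta_vals}) to the symmetric support $\underline{a}=-a/2$, $\overline{a}=a/2$. The critical uncertainty $a_{crit}$ is then characterised by the boundary condition $\beta_{G}^{+}=\beta_{R}^{-}$, i.e.\ the single value of $a$ at which the peace-guaranteeing interval collapses to a point.

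Writing $d := y_{G}-y_{R}$, this boundary condition reads
\[
\frac{\alpha}{1+e^{-a/2}\,e^{-d}} + \frac{\alpha}{1+e^{-a/2}\,e^{d}} = 1.
\]
The key move I will execute is the substitution $u := e^{-a/2}$: clearing denominators and collapsing $e^{d}+e^{-d}$ into $2\cosh(d)$ converts this transcendental equation into a quadratic in $u$,
\[
u^{2} + 2(1-\alpha)\cosh(d)\, u - (2\alpha-1) = 0.
\]
Under the standing hypothesis $\alpha\geq 1/2$ the constant term is non-positive, so this quadratic has a unique positive root
\[
u_{crit} = -(1-\alpha)\cosh(d) + \sqrt{(1-\alpha)^{2}\cosh^{2}(d) + (2\alpha-1)},
\]
which a one-line check confirms satisfies $u_{crit}\in(0,1)$ whenever $\alpha>1/2$. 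Setting $a_{crit} = -2\log u_{crit}$ then delivers the advertised closed form.

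For the comparative statics I will differentiate the quadratic implicitly rather than wrestle with the closed form directly. Differentiating in $\alpha$ gives
\[
\frac{\partial u_{crit}}{\partial \alpha} = \frac{1+u_{crit}\cosh(d)}{u_{crit}+(1-\alpha)\cosh(d)} > 0,
\]
so $a_{crit}=-2\log u_{crit}$ is decreasing in $\alpha$. Differentiating in $d$ produces an expression with a factor of $-\sinh(d)$ in the numerator and an unambiguously positive denominator; combined with the evenness of $\cosh$ (which ensures the whole construction depends on $d$ only through $|d|$), this yields $a_{crit}$ weakly increasing in $|y_{G}-y_{R}|$.

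The only genuine obstacle is algebraic: recognising that the substitution $u=e^{-a/2}$, together with the symmetric support, causes the two logistic terms to combine via the $\cosh$ identity into a clean quadratic in $u$. Once that step is in hand, both the closed-form expression and the two comparative statics follow mechanically from the quadratic formula and implicit differentiation.
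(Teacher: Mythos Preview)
Your proposal is correct and follows essentially the same path as the paper: both reduce the boundary condition $\beta_{G}^{+}=\beta_{R}^{-}$ to a quadratic and extract the unique admissible root. The paper works in the variable $z=e^{a}$ (with $a$ the half-width of the support), obtaining $(1-2\alpha)z^{2}+(1-\alpha)(e^{x}+e^{-x})z+1=0$; your substitution $u=e^{-a/2}$ is exactly the reciprocal, and dividing the paper's quadratic through by $z^{2}$ reproduces your equation $u^{2}+2(1-\alpha)\cosh(d)\,u-(2\alpha-1)=0$. So the closed-form derivations are the same computation in disguise.

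The one genuine difference is in the comparative statics. The paper differentiates the explicit root formula $z_{2}=\bigl(\zeta-\sqrt{\zeta^{2}-4(1-2\alpha)}\bigr)/\bigl(2(1-2\alpha)\bigr)$ directly with respect to $\alpha$ and $|y_{G}-y_{R}|$, which produces rather cumbersome expressions whose signs must then be argued term by term. Your implicit differentiation of the quadratic is markedly cleaner: the signs of $\partial u_{crit}/\partial\alpha$ and $\partial u_{crit}/\partial d$ drop out in one line each from $\partial F/\partial u>0$ and the obvious signs of $\partial F/\partial\alpha$ and $\partial F/\partial d$. Both approaches are valid; yours is shorter and less error-prone.
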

The precise expression of the closed-form solution can be found in the proof of \Cref{cor:a_crit_symmetric} in the appendix. Notice that $a_{crit}$ is minimised when $\expyg = \expyr$, and is increasing in the disparity between $\expyg$ and $\expyr$. This means that unequal armaments can deter war as the weakest party knows they are unlikely to win any conflict, even if they experience the largest positive shock possible. However, we expect to see peace agreements fail when groups are evenly matched in terms of armaments. This makes sense as when both groups are, before shocks, nearly equally likely to win any conflict, they will only be willing to accept resource splits that are close to equal.  The threshold on uncertainty also becomes more restrictive as war becomes increasingly less destructive (as $\alpha \to 1$). This is because increasing $\alpha$ increases the payoff to fighting.

\subsection{The Government's Preferences for Guaranteed Peace vs. Risking War} \label{sec:opt_beta}
 We now aim to understand the government's optimal choice of $\beta$ in the game's first stage. Given its impact on the probability of war (as demonstrated in the previous section), the government faces a trade-off in its choice of $\beta$. As $\beta$ increases, the share of resources that the government gets to keep in peacetime increases.  However, it also increases the risk of war breaking out, rendering the proposed $\beta$ irrelevant. 

For this section, we restrict our attention to when $a<a_{crit}$. This is because our focus is on when a regime will prefer robust peace and when it will risk war, it makes sense to study the portion of the parameter space for which this happens. To obtain clean results we now also assume that the shock $\epsilon_{R}$ is uniformly distributed on the symmetric interval $[-\Tilde{a},\Tilde{a}]$, for some $\Tilde{a} \in \mathbb{R}^{+}$. 

Let $\beta^{*}$ denote an optimal choice of $\beta$ for the government at the game's first stage. It should be immediately clear that if $\beta$ were constrained to be in the set of peace-guaranteeing transfers, then the government would choose $\beta = {\beta_{R}}^{-}$, as this is the maximum value if $\beta$ which guarantees peace. So it remains to determine if the government will choose any other value for $\beta^{*}$.

\begin{prop}\label{prop:beta_unique}
    Suppose that $a<a_{crit}$. Then $\beta^{*}$ lies on the interval $\left[\beta_{R}^{-},\beta_{R}^{+}\right)$ and is generically unique.
\end{prop}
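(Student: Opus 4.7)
The plan is to split the domain of $\beta$ according to the partition in \Cref{prop:beta_thresholds} and show (i) $\beta^{*}\geq\beta_R^{-}$, (ii) $\beta^{*}<\beta_R^{+}$, and (iii) the optimum on the remaining range $[\beta_R^{-},\beta_R^{+})$ is generically unique. On each region the government's expected payoff $U(\beta)$ has a different functional form, and I would write each piece out before comparing them.

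For (i), on $[\beta_G^{+},\beta_R^{-}]$ both groups always accept, so $U(\beta)=\beta$ is strictly increasing with sub-maximum at $\beta_R^{-}$. For $\beta<\beta_G^{+}$ the government itself fights with positive probability while the rebels still always accept (since $\beta<\beta_R^{-}$), so $U(\beta)=\mathbb{E}\left[\max\{\beta,\alpha p_G(\epsilon_R)\}\right]$, which is weakly increasing in $\beta$ and bounded above by $U(\beta_G^{+})=\beta_G^{+}\leq\beta_R^{-}$ (the last inequality follows from $a<a_{crit}$ via \Cref{thm:peace_threshold}). Hence $\beta^{*}\geq\beta_R^{-}$. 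For (ii), on $[\beta_R^{+},1]$ the rebels always fight so war is certain and $U(\beta)=\alpha\mathbb{E}[p_G(\epsilon_R)]$ is constant in $\beta$. On the interior range $(\beta_R^{-},\beta_R^{+})$ the government always accepts and the rebels fight iff $\epsilon_R>\epsilon^{*}(\beta)$ from \Cref{prop:fightthresh}, so writing $F,f$ for the uniform CDF and density on $[-\tilde{a},\tilde{a}]$,
\begin{equation*}
    U(\beta)=\beta F(\epsilon^{*}(\beta))+\alpha\int_{\epsilon^{*}(\beta)}^{\tilde{a}}p_G(\epsilon)f(\epsilon)\,d\epsilon.
\end{equation*}
Differentiating and invoking the rebels' indifference condition $\alpha p_G(\epsilon^{*}(\beta))=\alpha-1+\beta$ at the threshold collapses the envelope terms to give
\begin{equation*}
    U'(\beta)=F(\epsilon^{*}(\beta))+(1-\alpha)f(\epsilon^{*}(\beta))(\epsilon^{*})'(\beta).
\end{equation*}
At $\beta=\beta_R^{+}$ the first term vanishes (since $\epsilon^{*}(\beta_R^{+})=-\tilde{a}$), while $(\epsilon^{*})'<0$ and $\alpha<1$, so $U'(\beta_R^{+})<0$. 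Thus $U$ is strictly decreasing just to the left of $\beta_R^{+}$, placing $\beta^{*}$ strictly below $\beta_R^{+}$.

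For (iii), substituting $(\epsilon^{*})'(\beta)=-\alpha/\left[(1-\beta)(\alpha-1+\beta)\right]$ and the uniform density into $U'(\beta)=0$ yields the implicit equation
\begin{equation*}
    (\epsilon^{*}(\beta)+\tilde{a})(1-\beta)(\alpha-1+\beta)=(1-\alpha)\alpha,
\end{equation*}
and the corner case $\beta^{*}=\beta_R^{-}$ arises exactly when $U'(\beta_R^{-})\leq 0$. The \emph{main obstacle} is that the left-hand side of this FOC is not monotone in $\beta$: it is the product of a strictly decreasing factor $\epsilon^{*}(\beta)+\tilde{a}$ with the concave quadratic $h(\beta)=(1-\beta)(\alpha-1+\beta)$, which vanishes at $\beta=1-\alpha$ and $\beta=1$, so the equation may admit multiple roots. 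My approach is to verify $U''(\beta)<0$ strictly at any interior critical point, using the indifference condition once more to eliminate the explicit dependence on $p_G$. Two distinct strict local maxima on $[\beta_R^{-},\beta_R^{+})$ would force an intermediate critical point at which $U''\geq 0$, contradicting strict concavity at every critical point; hence there is at most one interior stationary point, which is the maximum whenever $U'(\beta_R^{-})>0$. The ``generically'' qualifier absorbs the measure-zero set of parameters for which $U'(\beta_R^{-})=0$ exactly and the interior stationary point coincides in value with the corner $\beta_R^{-}$.
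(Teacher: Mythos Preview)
Your parts (i) and (ii) are correct and essentially match the paper's route: the paper proves the same interval restriction via a chain of lemmas showing piecewise continuity, a positive derivative on $(\beta_G^{-},\beta_G^{+}]$, and a strictly negative derivative at $\beta_R^{+}$. Your $\max\{\beta,\alpha p_G\}$ formulation for the region below $\beta_G^{+}$ is in fact a cleaner way to see monotonicity than the paper's explicit derivative calculation, and your envelope computation of $U'(\beta_R^{+})<0$ is exactly Lemma~\ref{lem:neg_deriv}.

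Part (iii) has a genuine gap. The claim you plan to verify, that $U''(\beta)<0$ at \emph{every} interior critical point, is false in part of the parameter range. A direct computation (which does not simplify further using the FOC, since $U''$ already has no dependence on $\epsilon^{*}$) gives $2\tilde a\,U''(\beta)$ proportional to the quadratic $\beta^{2}+(3\alpha-4)\beta+(\alpha^{2}-4\alpha+3)$, whose smaller root can lie strictly inside $(\beta_R^{-},\beta_R^{+})$ when $a<a_{crit}$ (for instance $\alpha=0.6$, $y_G=y_R$, $a\approx 1$). In that regime $U'$ is first increasing and then decreasing, so if $U'(\beta_R^{-})<0$ there is an interior local \emph{minimum} followed by an interior local maximum; your premise fails at the minimum. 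The paper's Lemma~\ref{lem:interior_unique} sidesteps this by arguing one level up: it shows the quadratic above has at most one root in $[0,1]$, hence $U'$ changes monotonicity at most once, hence $U'$ has at most two zeros, of which at most one is a local maximum. That is the missing ingredient.

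Your treatment of ``generically'' is also thinner than the paper's. Non-uniqueness arises exactly when the payoff at the interior local maximum equals $\beta_R^{-}$, and the paper (Proposition~\ref{prop:beta_val_unique}) devotes a separate monotonicity-in-$a$ argument to show this tie occurs for at most one value of $a$, rather than simply asserting it is measure zero.
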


There are two key takeaways from \Cref{prop:beta_unique}. The first is that the government chooses a value of $\beta^{*}$ with which they will always be satisfied, and therefore never declare war when allowed to choose $\beta^{*}$, regardless of shock realisation. In other words $\beta^{*} > \beta_{G}^{+}$, and so $BR_{G}(\epsilon_{R};\beta^{*}) = \text{Accept}$ for all $\epsilon \in [-a,a]$.  The intuition for this is that choosing a value of $\beta$ that is either too high for the rebels always to accept, (i.e. $\beta>\beta_{R}^{-}$), or too low for the government always to accept (i.e. $\beta<\beta_{R}^{+})$,  increases the likelihood of war symmetrically. In the former case, the rebels are unlikely to accept $\beta^{*}$ after the shock realisation, whilst in the latter the government is unlikely to accept it.  However, when a higher value of $\beta$ is chosen the government gets to keep this larger share of resources if $\epsilon_{R}$ is small and the rebels are too weak to reject the proposed transfer. Thus the government is incentivised to award themselves as high a value of $\beta$ as possible.

The second takeaway is that the government never wants to set $\beta$ so high that the rebels will never accept it. In other words, $\beta^{*}<\beta_{R}^{-}$. The intuition for this is that if the rebels reject the proposed value of $\beta$ with certainty, then the choice of $\beta$ becomes irrelevant, as resources will be allocated according to which party succeeds in the conflict. The government would prefer to choose a value of $\beta$ which is slightly below $\beta_{R}^{+}$ which is accepted at least some of the time (specifically when the realisation of $\epsilon_{R}$ is sufficiently small). 

\Cref{prop:beta_unique} is also useful because it shows our game has predictive power: the equilibrium is generically unique. The reason for the qualifier on uniqueness is that for certain values of $\alpha$ and $y_{G}-y_{R}$ there is a single value of $a$ at which the government is indifferent between $\beta_{R}^{-}$ and the interior local maximum, which is strictly larger than $\beta_{R}^{-}$. At this point there is a jump-discontinuity in $\beta_{*}$- the government will suddenly switch from guaranteeing peace with small values of $\beta^{*}$ to demanding increasing large values of $\beta^{*}$. See \Cref{fig:beta_simulations_2} below for an illustration of this phenomenon.
 
This discussion leads us to our next result, which shows that the government undergoes a strategy switch as uncertainty increases. For low levels of uncertainty, the government's optimal strategy is to choose a peace-guaranteeing transfer. However, as $a$ increases, it switches to choosing higher values of $\beta$ which risk war. 

\begin{thm}\label{thm:non_mono}
    Suppose $a<a_{crit}$. The government's choice of $\beta^{*}$  demonstrates a non-monotonicity in $a$, and switches from guaranteeing peace to risking war. In particular, when a is sufficiently small, then  $\beta^{*} = \beta_{R}^{-}$, which is decreasing in $a$. When $a$ is sufficiently large then $\beta^{*}>\beta_{R}^{-}$, and is increasing in $a$.
\end{thm}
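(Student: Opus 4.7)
The plan is to express the government's expected payoff $V(\beta)$ on each region of the strategy space depicted in \Cref{fig:beta_part}, and then compare the local maxima. Let $T_R(\beta)$ denote the rebel fighting threshold from \Cref{prop:fightthresh}, with $T_R'(\beta)<0$ and $T_R(\beta_R^-)=\tilde a$. On $[\beta_G^+,\beta_R^-]$ peace is guaranteed so $V(\beta)=\beta$, with restricted maximum $\beta_R^-$. By \Cref{prop:beta_unique}, any optimum lies in $[\beta_R^-,\beta_R^+)$, and on the interior subregion $(\beta_R^-,\beta_R^+)$ the government always accepts, rebels fight whenever $\epsilon_R>T_R(\beta)$, and with $\epsilon_R$ uniform on $[-\tilde a,\tilde a]$,
\[
V(\beta) \;=\; \frac{T_R(\beta)+\tilde a}{2\tilde a}\,\beta \;+\; \frac{1}{2\tilde a}\int_{T_R(\beta)}^{\tilde a}\alpha\,p_G(\epsilon_R)\,d\epsilon_R.
\]
The key algebraic simplification is that rebel indifference at $\tau=T_R(\beta)$ gives the identity $\alpha p_G(T_R(\beta))=\alpha+\beta-1$ for all $\beta$, so differentiating via the Leibniz rule and cancelling collapses the expression to
\[
V'(\beta) \;=\; \frac{1}{2\tilde a}\bigl[T_R(\beta)+\tilde a+(1-\alpha)\,T_R'(\beta)\bigr].
\]

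I would then identify the regime switch by inspecting the right-derivative at the kink $\beta_R^-$: substituting $T_R(\beta_R^-)=\tilde a$ gives $V'(\beta_R^-)^+ = 1+(1-\alpha)T_R'(\beta_R^-)/(2\tilde a)$. Since $T_R'(\beta_R^-)$ remains bounded away from zero as $\tilde a\to 0$ while the denominator vanishes, this expression diverges to $-\infty$, so $V'(\beta_R^-)^+<0$ for all sufficiently small $\tilde a$; together with $V(\beta)=\beta$ on the left region and strict monotonicity of $V$ on the right (justified in the obstacle paragraph below), the global maximum is $\beta^*=\beta_R^-$. Conversely, as $\tilde a$ grows toward $a_{crit}/2$ the correction shrinks and $V'(\beta_R^-)^+>0$, so $V$ strictly exceeds $\beta_R^-$ immediately to the right of the kink and the maximum lies at some interior $\beta^*>\beta_R^-$ satisfying the first-order condition $T_R(\beta^*)+\tilde a+(1-\alpha)T_R'(\beta^*)=0$.

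The two monotonicity statements follow from direct comparative statics. From the closed form $\beta_R^-=1-\alpha/(1+e^{-\tilde a+(y_G-y_R)})$ (cf.\ \Cref{cor:beta_vals}), $\beta_R^-$ is strictly decreasing in $\tilde a$. In the interior regime, differentiating the FOC with respect to $\tilde a$ gives
\[
\frac{d\beta^*}{d\tilde a} \;=\; \frac{-1}{T_R'(\beta^*)+(1-\alpha)T_R''(\beta^*)} \;=\; \frac{-1}{2\tilde a\,V''(\beta^*)} \;>\;0,
\]
where the inequality uses that the second-order condition at an interior maximum gives $V''(\beta^*)<0$.

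The main obstacle is verifying that when $V'(\beta_R^-)^+\le 0$ the interior region produces no $\beta$ with $V(\beta)>\beta_R^-$, since \emph{a priori} $V$ need not be unimodal on $(\beta_R^-,\beta_R^+)$. I would address this via the auxiliary function $g(\beta):=T_R(\beta)+\tilde a+(1-\alpha)T_R'(\beta)$, so that $V'=g/(2\tilde a)$. Using the explicit form of $T_R$ one checks that $g'(\beta)<0$ whenever $\beta\ge 1-\alpha/2$; for small $\tilde a$ one has $\beta_R^->1-\alpha/2$, so $g$ is strictly decreasing on the entire interior region and $g(\beta_R^-)\le 0$ propagates to $g<0$ throughout, giving the needed monotonicity. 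For the remaining parameter range one invokes the generic uniqueness from \Cref{prop:beta_unique} to rule out pathological configurations with two isolated interior maxima whose values straddle $\beta_R^-$, reducing the global comparison to the sign test on $V'(\beta_R^-)^+$ and leaving the jump discontinuity at the switch threshold (flagged after \Cref{prop:beta_unique}) as the only place where $\beta^*$ itself fails to be continuous in $\tilde a$.
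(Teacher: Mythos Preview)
Your Leibniz-rule derivation of $V'(\beta)$ using the rebel-indifference identity $\alpha p_G(T_R(\beta))=\alpha+\beta-1$ is correct and considerably cleaner than the paper's direct computation; it also recovers the interior comparative static (the content of \Cref{prop:beta_deriv_a}) in one line from the second-order condition, and the $\tilde a\to 0$ divergence argument for the small-$a$ regime is valid.

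The genuine gap is in the large-$\tilde a$ step. You assert that ``the correction shrinks'' as $\tilde a$ grows, but this is false in general. From $T_R'(\beta_R^-)=-(1+e^{y_G-y_R-\tilde a})^2/(\alpha\, e^{y_G-y_R-\tilde a})$ one can check that when $y_G=y_R$ the ratio $(1-\alpha)\lvert T_R'(\beta_R^-)\rvert/(2\tilde a)$ stays strictly above $1$ for \emph{every} $\tilde a<a_{crit}$, so $V'(\beta_R^-)^+$ never turns positive on the admissible range. This is precisely the jump-discontinuity regime flagged after \Cref{prop:beta_unique}: there the switch is driven by an interior local maximum, separated from the corner by a local minimum, overtaking the value $\beta_R^-$---a transition your right-derivative test cannot detect. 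The paper supplies two ingredients you are missing: an explicit quadratic analysis in $e^{y_G-y_R-a}$ that exhibits the parameter region where $V'(\beta_R^-)^+>0$ does hold, and the payoff-level monotonicity from the proof of \Cref{prop:beta_val_unique} (tracking $\mathbb{E}[u_G(\tilde\beta)]-\beta_R^-$ along the interior branch) to pin down a unique switching value of $a$ even when the right-derivative stays negative. A smaller issue: your unimodality check via $\beta_R^->1-\alpha/2$ is equivalent to $y_G-y_R>\tilde a$, so it does not cover $y_G\le y_R$; the paper instead bounds the roots of the second derivative directly (\Cref{lem:interior_unique}).
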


Our result is important because it characterises the strategic preferences of high and low uncertainty regimes and shows they are very different. Crucially we show when a regime will switch from guaranteeing peace to risking war, \emph{even when guaranteed peace is still possible.} Moreover, our non-monotonicity result shows that this switch is accompanied by a change in how uncertainty dynamics influence pre-conflict shares of resources. 

Specifically, we find that the government always chooses to guarantee peace as long as uncertainty is not so large as to make it too painful for them.  The fact that the government prefers to guarantee peace here makes intuitive sense. Because war degrades capital, ensuring war does not occur creates some surplus that the government can allocate themselves.\footnote{Formally, suppose that the rebels' probability of winning when they receive the largest possible shock, $\Tilde{a}$, is $p(\Tilde{a})$. Then an offer of $\alpha p(\Tilde{a})$ makes the revels indifferent between fighting and so the government may allocate $1-\alpha p(\Tilde{a}) > \alpha(1-p(\Tilde{a})$ to themselves.} 

In this setting, the government does not need to promise as much to the rebel group to ensure they will never fight, as the likelihood of rebels getting a large positive shock to their arms is small, meaning that $\beta_{R}^{-}$ can be relatively large.  Therefore, for small increases in uncertainty, the government is willing to accept a slightly lower share of resources to ensure peace, as this generates surplus by preventing capital decay, which they can allocate to themselves. Thus, a small increase in uncertainty actually benefits the rebels, as they receive a larger share of the country's resources without having to face open conflict. 

However, because $\beta_{R}^{-}$ is decreasing in $a$, guaranteeing peace becomes increasingly painful for the government as uncertainty grows. This is because the largest possible level of post-shock armaments available to the rebels increases: and thus the rebels require increasingly large shares of the resources to ensure that they do not fight with certainty. At some point, the loss in the government's peace-time shares of resources outweighs the additional surplus gained in avoiding conflict. Thus, the government's strategy switches from guaranteeing peace to risking war (i.e. $\beta^{*} > {\beta_{R}}^{-}$). 

When this is the case, we know that $\beta^{*}$ is increasing in $a$. This may initially seem counter-intuitive, as increasing $\beta$ also increases the probability of war. However, because the rebels are equally likely to receive a large negative shock as they are a large positive shock, there are many shock realisations where the government will retain their large share of $\beta$ as the rebels will be too weak to declare war. Thus appeasing the rebels with a generous transfer which does not fully guarantee peace is less worthwhile as for most of the shock realisation space, the rebels will be satisfied with a smaller transfer.

An implication of the non-monotonicity in  \Cref{thm:non_mono} is that in regimes where uncertainty is high, we expect to see the rebels declare for lower realisations of shocks as the pre-conflict share of resources which is allocated to them is increasingly poor.

We have simulated the values for $\beta^{*}$, and include some illustrations which demonstrate some of the key properties described in our discussion above. See \Cref{fig:beta_simulations_1} illustrates both how $\beta^{*}$ demonstrates the non-monotonicity described in the result above, and how it changes with increasing $\alpha$. Notice that it is decreasing in $\alpha$. This follows from the fact that the rebels need a larger incentive to not fight when war is less destructive. \Cref{fig:beta_simulations_2} illustrates the jump discontinuity in $\beta^{*}$ which occurs for certain parts of the parameter space.

\begin{figure}
        \centering
        \includegraphics[width =0.7\textwidth]{ 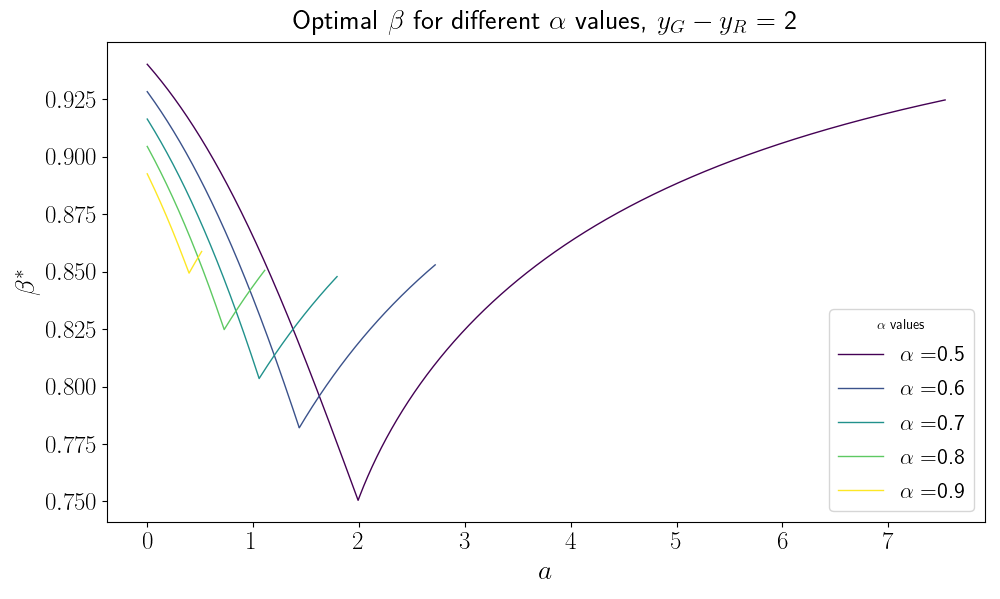} 
        \caption{Simulations illustrating the non-monotonicity in $\beta^{*}$ for increasing $a$, for different values of $\alpha$. The truncating curve length is due to the fact that $a_{crit}$ shrinks with $\alpha$}
        \label{fig:beta_simulations_1}
\end{figure}
\begin{figure}
    \centering
        \includegraphics[width = 0.7\textwidth]{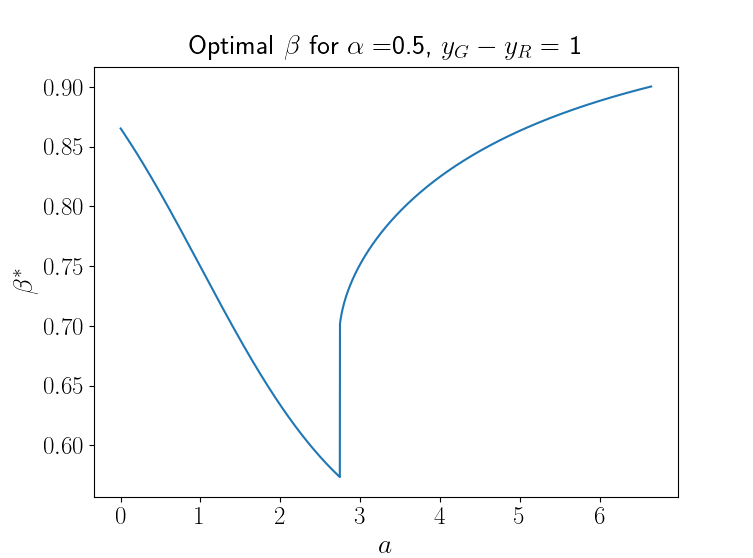}
        \caption{Simulation illustrating the jump discontinuity in $\beta^{*}$ that can occur for some values of the parameter space.}
         \label{fig:beta_simulations_2}
\end{figure}
   
\subsubsection{The Impact of Increasing Uncertainty when Peace is not Guaranteed}
Our results above characterise when a regime switches from guaranteeing peace to risking war. A key question that arises from this is what happens to certain outcomes once this switch has occurred. In the section below, we therefore study how increasing uncertainty impacts key outcomes such as the probability of war and total welfare.

\begin{prop} \label{prop:prob_war}
Suppose $a<a_{crit}$. If $a$ is sufficiently large so that $\beta^{*}>\beta_{R}^{-}$, then the probability of war is in increasing in $a$.
\end{prop}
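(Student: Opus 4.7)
The plan is to reduce the proposition to a one-variable monotonicity statement via the government's first-order condition, and then verify that statement by substituting $\tilde{a} := a/2$ out of the derivative using the FOC itself.

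First, I would invoke \Cref{prop:beta_unique} to conclude that the hypothesis $\beta^{*} > {\beta_{R}}^{-}$ places $\beta^{*}$ in the interval $({\beta_{G}}^{+}, {\beta_{R}}^{+})$, so the government always accepts and war occurs if and only if the rebels fight. Denoting by $t_{R}(\beta)$ the rebels' fighting threshold from \Cref{prop:fightthresh} (so that rebels fight iff $\epsilon_{R} \ge t_{R}(\beta)$, and $t_{R}$ is strictly decreasing in $\beta$ with $-t_{R}'(\beta) = \alpha/[(1-\beta)(\alpha - 1 + \beta)]$), the uniform distribution on $[-\tilde{a},\tilde{a}]$ gives
\begin{equation*}
P(\text{war}) \;=\; \frac{\tilde{a} - t_{R}(\beta^{*})}{2\tilde{a}}.
\end{equation*}

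Next I would derive the government's FOC for $\beta^{*}$. The key simplification is that at $\epsilon_{R} = t_{R}(\beta)$ the rebels are indifferent, so $\alpha p_{G}(t_{R}(\beta)) = \alpha - 1 + \beta$, and hence the marginal ``peace surplus'' $\beta - \alpha p_{G}(t_{R}(\beta)) = 1-\alpha$ is constant in $\beta$. Differentiating the government's expected payoff and using this identity yields
\begin{equation*}
(t_{R}(\beta^{*}) + \tilde{a}) + (1-\alpha)\,t_{R}'(\beta^{*}) \;=\; 0.
\end{equation*}
Substituting into the expression for $P$ and using the explicit form of $t_R'$ rearranges to
\begin{equation*}
1 - P(\text{war}) \;=\; \frac{\alpha(1-\alpha)}{2\tilde{a}\,(1-\beta^{*})(\alpha - 1 + \beta^{*})}.
\end{equation*}
Hence $P$ is increasing in $a$ if and only if $\mu(\tilde{a}) := \tilde{a}\,(1-\beta^{*})(\alpha-1+\beta^{*})$ is increasing in $\tilde{a}$.

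The computational core is to differentiate $\mu$, using implicit differentiation of the FOC to obtain $d\beta^{*}/d\tilde{a} > 0$ and the FOC itself to eliminate the combination $\tilde{a}\,u^{*}v^{*}$, where $u^{*} := \alpha-1+\beta^{*}$ and $v^{*} := 1-\beta^{*}$. After simplification I expect the identity
\begin{equation*}
\mu'(\tilde{a}) \;=\; \frac{(u^{*} v^{*})^{2}\,\bigl[\alpha - (v^{*}-u^{*})\,t_{R}(\beta^{*})\bigr]}{\alpha\,\bigl[u^{*}v^{*} - (1-\alpha)(v^{*}-u^{*})\bigr]}.
\end{equation*}
The denominator is positive, since it is precisely the factor whose positivity is already invoked in \Cref{thm:non_mono} to sign $d\beta^{*}/d\tilde{a}$. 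So the sign of $\mu'$ reduces to the sign of $\alpha - (v^{*}-u^{*})\,t_{R}(\beta^{*})$.

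The hard part will be the final inequality $(v^{*}-u^{*})\,t_{R}(\beta^{*}) < \alpha$. In the symmetric case ($y_{G}=y_{R}$) this cross-term equals $(v^{*}-u^{*})\log((1-\beta^{*})/(\alpha - 1 + \beta^{*}))$, which is nonnegative because both factors share a sign (vanishing simultaneously at $\beta^{*} = 1-\alpha/2$) but is not obviously bounded by $\alpha$. The bound is enforced by the FOC itself: parameterising the FOC locus by $\xi := u^{*}/\alpha$ and using the hypothesis $\tilde{a} < \tilde{a}_{\text{crit}}$ keeps $\xi^{*}$ in a neighbourhood of $1/2$ on which $\alpha(1 - 2\xi)\log((1-\xi)/\xi) < \alpha$, which is the needed inequality. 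The asymmetric case $y_{G} \neq y_{R}$ only adds a constant drift to $t_{R}$ and does not change the structure of the argument.
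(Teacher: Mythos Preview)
Your route differs substantially from the paper's. The paper writes $P(\text{war}) = \tfrac{1}{2a}\bigl(a - y_G + y_R + \log(\alpha/(1-\beta^*) - 1)\bigr)$ and differentiates directly in $a$, then argues for the sign of the two resulting terms using only $\beta^{*} \ge \beta_{R}^{-}$ and $\partial\beta^{*}/\partial a > 0$ from \Cref{prop:beta_deriv_a}; there is no substitution via the first-order condition and no auxiliary quantity $\mu$. Your path through the FOC is correct up to and including the identity $1-P = \alpha(1-\alpha)/\bigl(2\tilde{a}\,u^{*}v^{*}\bigr)$ and the displayed formula for $\mu'(\tilde{a})$, both of which check out algebraically (with the denominator positive by the second-order condition underlying \Cref{prop:beta_deriv_a}).

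The genuine gap is the final inequality $(v^{*}-u^{*})\,t_{R}(\beta^{*}) < \alpha$, which you only sketch. In the symmetric case you assert that $\tilde{a} < \tilde{a}_{\text{crit}}$ confines $\xi^{*} := u^{*}/\alpha$ to the region where $(1-2\xi)\log\bigl((1-\xi)/\xi\bigr)<1$, but you never verify this containment, and since that function is unbounded as $\xi \to 0^{+}$ or $\xi \to 1^{-}$ the step cannot be skipped. The asymmetric case is not ``just a constant drift'': the additional piece is $(v^{*}-u^{*})(y_{G}-y_{R})$, whose sign depends on $v^{*}-u^{*} = 2-\alpha-2\beta^{*}$ and hence on where $\beta^{*}$ lies relative to $1-\alpha/2$; it can either tighten or loosen the required bound and must be tracked explicitly. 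As written, the heart of your monotonicity argument is left unproved, whereas the paper's direct differentiation avoids this reduction entirely.
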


Our first result shows that once the government risks war,\footnote{Obviously for $\beta^{*}= \beta_{R}$ fighting is not a possibility.} their country is more likely to fight as uncertainty increases. This result is largely driven by the fact that $\beta^{*}$ is increasing $a$ and thus the rebels are willing to fight for lower realisations of $\epsilon_{R}$, all whilst the highest possible realisation of $\epsilon_{R}$ is increasing. Abandoning peace guaranteeing transfers does not simply make war a theoretical possibility but also increases the probability that it occurs. \Cref{fig:prob_war} below shows the results from a simulation, which shows how increasing uncertainty can lead to a steep increase in the probability of conflict occurring. As can be seen in the figure, the probability of fighting increases steeply in $a$ once $\beta^{*}$ switches from $\beta_{R}^{-}$. Thus, once war is on the table, a small increase in uncertainty can dramatically decrease the probability of peace. 
\begin{figure}[h]
    \centering
    \includegraphics[width=0.7\textwidth]{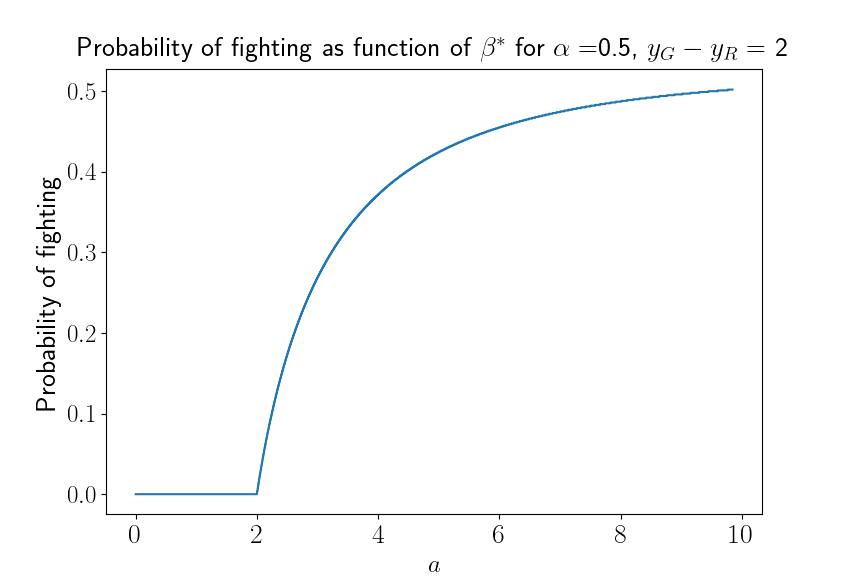}
    \caption{Simulation of how the probability of war changes with $a$. The flat portion of the curve is when $\beta^{*} = \beta_{R}^{-}$, whilst the monotonically increasing portion is when $\beta^{*} > \beta_{R}^{-}$.}
    \label{fig:prob_war}
\end{figure}

\begin{prop}\label{prop:welfare_decreasing}
    When $\beta^{*}=\beta_{R}^{-}$, total welfare is unchanging in $a$. When $\beta^{*}>\beta_{R}^{-}$, total welfare is decreasing in $a$.
\end{prop}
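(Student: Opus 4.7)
The plan is to reduce the welfare comparison to a statement about the probability of war, exploiting the linearity of utility and then invoking \Cref{prop:prob_war}.

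First, I would compute expected total welfare explicitly. Since utility is linear and there are no other payoff-relevant terms, total welfare is simply the expected sum of consumed resources. Conditional on peace, the government consumes $\beta$ and the rebels consume $1-\beta$, so total resources consumed equal $1$. Conditional on war, the winner consumes $\alpha$ and the loser consumes $0$, so total resources consumed equal $\alpha$ regardless of the identity of the victor. Writing $P_{\text{war}}(a)$ for the probability of war as a function of uncertainty, this yields
\begin{equation*}
W(a) \;=\; 1 \cdot \bigl(1 - P_{\text{war}}(a)\bigr) + \alpha \cdot P_{\text{war}}(a) \;=\; 1 - (1-\alpha)\, P_{\text{war}}(a).
\end{equation*}

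Next, I would split into the two cases stated in the proposition. When $\beta^{*} = \beta_{R}^{-}$, \Cref{prop:beta_thresholds} together with the definition of a peace-guaranteeing transfer tells us that both groups accept for every realisation of $\epsilon_{R} \in [\underline{a},\overline{a}]$; hence $P_{\text{war}}(a) = 0$ and $W(a) = 1$, which is manifestly unchanging in $a$. When instead $\beta^{*} > \beta_{R}^{-}$, \Cref{prop:prob_war} delivers $dP_{\text{war}}/da > 0$. Since $\alpha < 1$ (so that war is strictly destructive), the factor $1-\alpha$ is strictly positive, and differentiating the expression for $W(a)$ gives
\begin{equation*}
\frac{dW}{da} \;=\; -(1-\alpha)\,\frac{dP_{\text{war}}}{da} \;<\; 0,
\end{equation*}
establishing that welfare is strictly decreasing in $a$ on this region.

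I do not anticipate a real obstacle: the heavy lifting has already been done by \Cref{prop:prob_war}, and the rest is a one-line accounting argument using linearity of utility. The only detail worth flagging is that the argument implicitly uses $\alpha < 1$; in the degenerate case $\alpha = 1$ war is costless and welfare is trivially constant, but this is already excluded in any setting where the destructiveness of war plays a role.
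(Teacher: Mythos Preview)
Your proposal is correct and mirrors the paper's own proof almost exactly: the paper writes total welfare as $1 - \mathbb{P}(\epsilon_{R}>t_{R}(\beta^{*}))(1-\alpha)$, observes it equals $1$ in the peace-guaranteeing case, and then invokes \Cref{prop:prob_war} for the monotonicity in the risky case. Your derivation of the welfare expression is slightly more explicit, and your caveat about needing $\alpha<1$ is a fair point that the paper leaves implicit.
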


Our second result highlights how, once the government switches away from peace-guaranteeing solutions, increases in uncertainty decrease total welfare in the country. This decrease is entirely driven by the fact that war, and hence the erosion of capital, becomes increasingly likely. Interestingly, when $\beta^{*} = \beta_{R}^{-}$, then the government's welfare strictly decreases in uncertainty. Conversely, in this scenario, the rebels' welfare is increasing when peace-guaranteeing solutions are preferred. This is because $\beta_{R}^{-}$ is decreasing in $a$, as per \Cref{cor:beta_vals}.

Unfortunately, it is not possible to analytically determine the impact of increasing uncertainty on each individual group's welfare when $\beta^{*}>\beta_{R}^{-}$. However, our numerical simulations suggest that, in this case, the government's welfare continues to decrease in $a$ monotonically. The rebels' payoff demonstrates several peculiar non-monotonocities. It is increasing in $a$ when the government prefers $\beta_{R}^{-}$, as predicted by our comparative statics result. However, once the government abandons guaranteed peace (which occurs at the sharp point in \Cref{fig:rebel_payoff}), the rebels' payoff decreases for some range of $a$. However, as $a$ continues to grow, it increases again in $a$- although it remains below the maximum payoff they achieve just before the government abandons $\beta_{R}^{-}$. We illustrate the results of these simulations in \Cref{fig:payoffs} below.

\begin{figure}
    \centering
    \begin{subfigure}{0.49\textwidth}
        \centering
        \includegraphics[width=\textwidth]{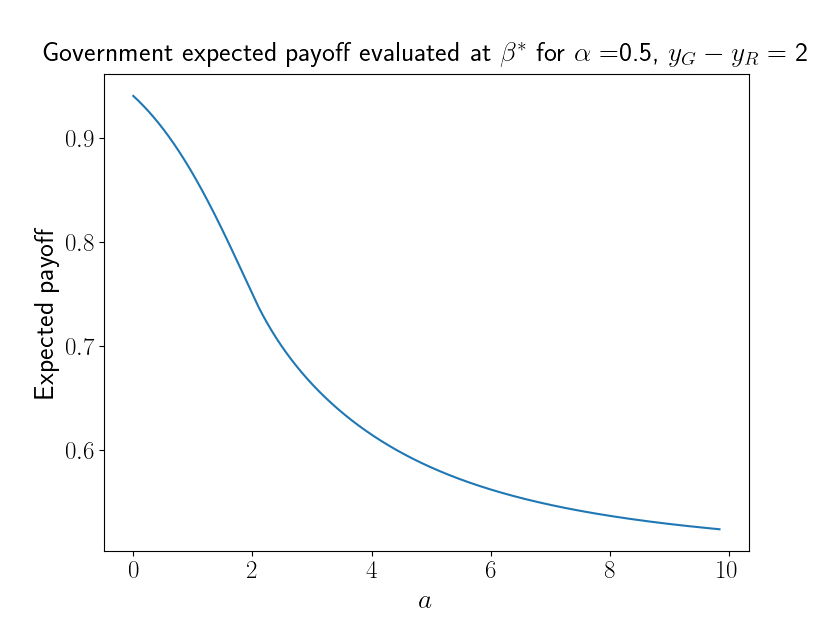}
        \caption{Simulation of the government's expected payoff. The change away from $\beta_{R}^{-}$ occurs at the point where the curvature changes.}
    \end{subfigure}
    \begin{subfigure}{0.49\textwidth}
        \centering
        \includegraphics[width=\textwidth]{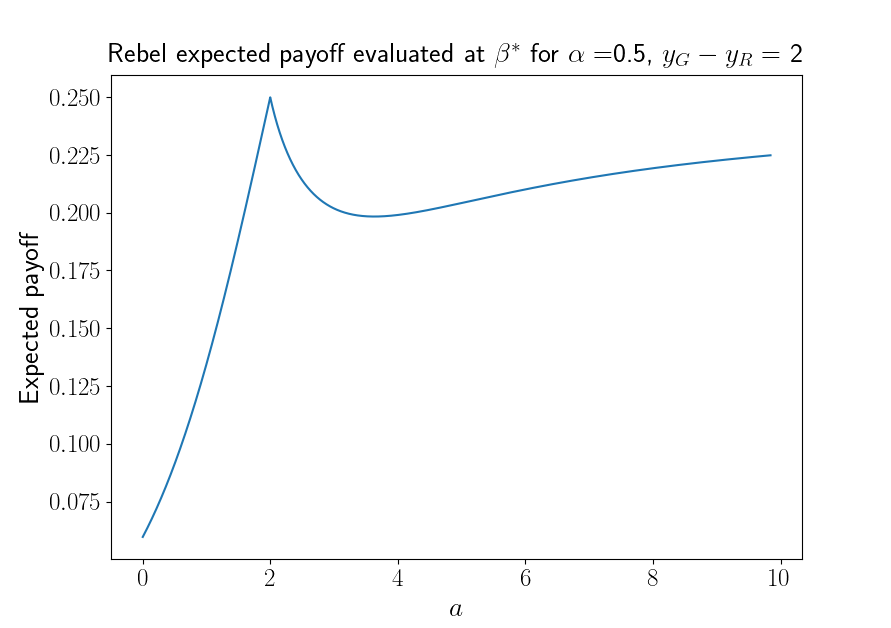}
        \caption{Simulation of the rebels' expected payoff. The change away from $\beta_{R}^{-}$ occurs at the point where of non-differentiability.}
        \label{fig:rebel_payoff}
    \end{subfigure}
    \caption{Expected payoffs for each group evaluated at $\beta^{*}$.}
    \label{fig:payoffs}
\end{figure}

Our results also suggest that institutions with the power to propose resource splits (called the government in our work above) will be co-operative in reducing military uncertainty as their welfare is monotonically decreasing in $a$. Conversely, t the groups whose only means of challenging resource splits is through the threat of violence (called the rebels in our work above) are incentivised to increase uncertainty as this improves their welfare even when war does not break out. 

This may have implications, for example, in nuclear nonproliferation negotiations, where militarily weak countries with low negotiating power have an incentive to increase uncertainty around their nuclear capabilities. Similarly, it may explain why peace deals are so difficult to maintain in conflicts which feature terrorism or guerilla warfare, where the strength of the rebels is difficult to monitor and so governments are not incentivised to propose peace-guaranteeing transfers.

\subsubsection{Policy Implications}
Our results above have shown that increasing uncertainty has material negative effects once peace-guaranteeing transfers are abandoned. This highlights the importance of policy interventions such as monitoring and transparent militarisation which reduce uncertainty to below the level at which this abandonment occurs. 

The takeaway from our results about the role of the government in maintaining peace is somewhat mixed. On the one hand, it shows that a peace-guaranteeing transfer is sometimes aligned with the government's optimal proposed transfer. Thus, in these scenarios, if preventing war and state repression is a policymaker's primary objective then the government can be trusted to ensure peace with minimal outside intervention. However, on the other hand, \Cref{thm:non_mono} also shows that the governments operating in high uncertainty environments may rationally choose to risk war in the absence of intervention. This is especially a problem if we want to prevent war for non-resource-distribution reasons such as preventing loss of human life.

Avoiding war is valuable, so preventing regimes from switching to a scenario where they risk war is important.  This is especially true given our finding that the move away from peace-guaranteeing transfers does not simply put an infinitesimally small likelihood of war on the table, but can lead to substantial increases in the likelihood of war. This goes beyond our discussion in the section above wherein substantial uncertainty simply eliminates the set of peace-guaranteeing transfers.  Thus, if we are aiming to prevent the loss of human life through open conflict we need some intervention in negotiations to ensure the selection of a peace-guaranteeing solution: ideally, one that reduces uncertainty in some way. 

However, the results above also show that reducing uncertainty is not valuable once it is below the level at which the government prefers a peace-guaranteeing transfer. This is because the results above show that increasing uncertainty has a starkly different impact depending on the regime. When $\beta^{*} = \beta_{R}^{-}$, increasing uncertainty has no impact on the likelihood of war nor total welfare. However, once uncertainty. Therefore, interventions should take the comparative statics above into account.  If uncertainty is above the critical level at which peace-guaranteeing transfers are preferred, then a small increase in uncertainty leads to a less favourable share of pre-conflict resources for the rebels.

Our results also emphasize the importance of efforts to enable parties to renegotiate their resource shares without returning to violence when there is unresolved uncertainty. Part of what drives open conflict in our model is that neither party may renegotiate after the shock is realised. Although there are many instances where returning to the negotiation table is difficult (see the model section for further discussion), our work emphasises that efforts to improve negotiations are essential.

        \section{Conclusion}\label{sec:conclusion}
    	We have produced a parsimonious one-shot model which attempts to understand how uncertainty regarding achievable conflict effort impacts the effectiveness of negotiated outcomes in ensuring peace. Our findings show that when uncertainty is too large, it is impossible to obtain a split of resources which guarantees peace, and thus open conflict is always on the table. We then study how this influences the government's optimal choice of the proposed transfer and find that when uncertainty is sufficiently low, the government chooses a transfer that overcomes the commitment problem and ensures lasting peace. However, as uncertainty increases the government abandons this strategy and instead opts for an increasing share of resources to themselves. This happens even if the set of peace-guaranteeing transfers is non-empty. 

These findings are interesting for several reasons. Firstly, they show that conflict is often unpreventable, even in the face of perfect information and symmetric levels of armaments. Our results may explain why some conflict scenarios are so difficult to resolve. Secondly, they provide insights about when outsider intervention is necessary during civil conflicts to ensure lasting peace: in particular, if uncertainty is sufficiently low then governments can be trusted to choose a set of transfers which ensure peace. Finally, our results may explain why powerful states are willing to offer large concessions to weak but unpredictable opponents. 

Further extensions may include allowing for a more general expression distribution of shocks, and utilising the power-form specification of the Tullock contest success function (as opposed to the logit-form which is used here).  
        \newpage
        \appendix
	\section{Appendix}\label{sec:appendix}
	    \subsection{Proofs for \Cref{sec:threshold}}
\subsubsection{Proof for \Cref{prop:fightthresh}}
\begin{proof}
	We have that group $i$ will want to fight if and only if their expected payoff from  fighting is greater than the payoff they would get from accepting the proposed split:
	\begin{align*}
		p_{i}\left(\expyg,\expyr,\epsilon_{R}\right)\alpha \geq \beta_i.
	\end{align*}
	
	Since $p_{i}\left(\expyg,\expyr,\epsilon_{R}\right) \leq 1$, we have that it is never optimal to declare war if $\alpha < \beta_{i}$. Thus the government will never declare war if $\beta > \alpha$, and similarly the rebels will never declare war if $1-\beta>\alpha$.
	
	Now assume that $\beta<\alpha$. The government will want to fight if
	\begin{align*}
		\frac{\exp(\expyg)}{\exp(\expyg)+\exp(\expyr+\epsilon_{R})}\alpha & \geq \beta,\\
		\iff \exp(y_{R}-y_{G})\exp(\epsilon_{R})  &\leq  \frac{\alpha}{\beta}-1\\
		\iff \epsilon_{R} & \leq (\expyg-\expyr) + \log\left(\frac{\alpha}{\beta}-1\right);
	\end{align*}
	where taking the logarithm is legitimate by the assumption $\beta\leq\alpha$.
	
	Similarly, if we instead assume $1-\beta\leq\alpha$, group $2$ will want to fight if
	\begin{align*}
		\frac{exp(\expyr+\epsilon_{R})}{\exp(\expyg)+\exp(\expyr+\epsilon_{R})}\alpha &\geq 1-\beta,\\
		\iff  \exp(\expyg-\expyr)\exp(-\epsilon_{R})  &\leq \frac{\alpha}{1-\beta}-1\\
		\iff -\epsilon_{R} &\leq (\expyr-\expyg) + \log\left(\frac{\alpha}{1-\beta}-1\right)\\
		\iff \epsilon_{R} &\geq (\expyg-\expyr) -\log\left(\frac{\alpha}{1-\beta}-1\right).
	\end{align*}
\end{proof}

\subsubsection{Proof of \Cref{prop:beta_thresholds}}
The proof of \Cref{prop:beta_thresholds} is below. We first require the following two lemmas.

To aid our discussion, we will denote the respective lower and upper bounds on the shock realisation for which neither group fights as 
\begin{align*}
	{{t}_{G}} &= \log\left(\frac{\alpha}{\beta} -1\right)+(\expyg - \expyr); \text{ and }\\
	{{t}_{R}} &= -\log\left(\frac{\alpha}{(1-\beta)}-1\right) + (\expyg - \expyr).
\end{align*}

\begin{lemma}\label{lemma:l_bone_l_btwo}
	We always have that $t_{R} \geq t_{G}$.
\end{lemma}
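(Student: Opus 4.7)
The plan is to show the claim by algebraic manipulation of the two thresholds. First I would observe that the term $y_G - y_R$ appears additively in both $t_G$ and $t_R$, so it cancels when we form $t_R - t_G$. The inequality $t_R \geq t_G$ therefore reduces to
\[
-\log\!\left(\frac{\alpha}{1-\beta} - 1\right) \;\geq\; \log\!\left(\frac{\alpha}{\beta} - 1\right),
\]
or equivalently, after moving both terms to one side and combining logs,
\[
\left(\frac{\alpha}{\beta} - 1\right)\!\left(\frac{\alpha}{1-\beta} - 1\right) \;\leq\; 1.
\]

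Next I would clear denominators. Multiplying through by $\beta(1-\beta) > 0$ and expanding the left-hand side $(\alpha-\beta)(\alpha+\beta-1)$ gives $\alpha^2 - \alpha - \beta^2 + \beta$, while the right-hand side becomes $\beta - \beta^2$. After cancellation of $\beta - \beta^2$ from both sides, the inequality collapses to $\alpha^2 - \alpha \leq 0$, i.e.\ $\alpha(1-\alpha) \geq 0$, which holds for every $\alpha \in [0,1]$ by assumption. This is the heart of the argument.

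The only subtlety — and the closest thing to an obstacle — is that both $t_G$ and $t_R$ are only defined as finite real numbers when the arguments of the two logarithms are positive, i.e.\ when $\beta < \alpha$ and $1-\beta < \alpha$ respectively (which forces $\alpha > 1/2$). I would handle the remaining cases separately: if $\beta \geq \alpha$, then \Cref{prop:fightthresh} says the government never fights, so $t_G$ plays the role of $-\infty$ and the inequality is vacuous; symmetrically, if $1-\beta \geq \alpha$ the rebels never fight and $t_R$ plays the role of $+\infty$. In either degenerate case $t_R \geq t_G$ holds trivially, so the algebraic derivation above covers the only nontrivial regime and the lemma follows.
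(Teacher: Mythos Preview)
Your proof is correct and follows essentially the same route as the paper: cancel the common $y_G-y_R$ term, clear denominators under the assumption that both log arguments are positive, and reduce the inequality to $\alpha^{2}-\alpha\le 0$, which holds for all $\alpha\in[0,1]$. Your additional remark handling the degenerate cases $\beta\ge\alpha$ or $1-\beta\ge\alpha$ is a nice touch that the paper only implicitly assumes away.
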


\begin{proof}
	For the statement to hold we need that
	\begin{align}
		\log\left(\frac{1-\beta}{\alpha-(1-\beta)}\right)  & \geq \log\left(\frac{\alpha}{\beta} -1\right)\label{cond:orig_alpha}, \\
		\nonumber	\iff \log\left(\frac{1-\beta}{\alpha-(1-\beta)}\right) &\geq \log\left(\frac{\alpha-\beta}{\beta}\right),\\
		\nonumber	\iff (1-\beta)\beta &\geq (\alpha-(1-\beta))(\alpha-\beta);
	\end{align}
	
	where the final line holds because we assume that $\alpha - (1-\beta)>0$ and the fact that $\log$ is a monotonically increasing. Multiplying out the brackets gives: 
	
	\begin{align}
		\nonumber \beta - \beta^{2} &\geq \alpha^2 -\alpha + \alpha\beta -\alpha\beta +\beta-\beta^2\\
		\iff 0 &\geq \alpha^2 -\alpha.\label{cond:alpha_l_zero}
	\end{align}
	But, line \ref{cond:alpha_l_zero} is true for all $\alpha$ $\in [0,1]$, thus the statement in line \ref{cond:orig_alpha} always holds. 
\end{proof}

\begin{lemma}\label{lem:thresh_mono}
	The threshold $t_{G}$ is monotonically decreasing in $\beta$ if $\beta < \alpha$.
	
	The threshold $t_{R}$ is monotonically decreasing in $\beta$ if $\beta > 1-\alpha$
\end{lemma}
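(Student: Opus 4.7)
The plan is a direct differentiation argument, since both thresholds are smooth logarithmic functions of $\beta$ on the relevant ranges. For $t_G(\beta) = \log(\alpha/\beta - 1) + (y_G - y_R)$, I would rewrite the argument of the logarithm as $(\alpha - \beta)/\beta$ and then differentiate, which after applying the chain rule yields
\begin{equation*}
\frac{dt_G}{d\beta} = \frac{-\alpha}{\beta(\alpha - \beta)}.
\end{equation*}
The sign of this expression is determined by the denominator. Under the hypothesis $\beta < \alpha$ (and implicitly $\beta > 0$, since $\beta \in [0,1]$ is a resource share and $t_G$ is only defined when the logarithm's argument is positive), the denominator is strictly positive, so the derivative is strictly negative. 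This establishes the first claim.

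For $t_R(\beta) = -\log(\alpha/(1-\beta) - 1) + (y_G - y_R)$, I would perform the analogous rewriting, expressing the argument as $(\alpha - (1-\beta))/(1-\beta)$. Differentiating and simplifying gives
\begin{equation*}
\frac{dt_R}{d\beta} = \frac{-\alpha}{(1-\beta)(\alpha - (1-\beta))}.
\end{equation*}
Under the hypothesis $\beta > 1 - \alpha$, we have $\alpha - (1-\beta) > 0$, and $1 - \beta > 0$ on the interior of $[0,1]$, so again the derivative is strictly negative, yielding the second claim.

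There is no real obstacle here; the proof is a routine derivative calculation. The only care needed is noting that the domains of definition of $t_G$ and $t_R$ (i.e.\ where the logarithm's argument is positive) coincide exactly with the conditions $\beta < \alpha$ and $\beta > 1-\alpha$ respectively, so the hypotheses both ensure differentiability and pin down the sign of the derivative. I would write the proof as two short parallel computations and conclude each by reading off the sign from the factored denominator.
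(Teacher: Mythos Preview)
Your proposal is correct and mirrors the paper's own proof almost exactly: the paper also differentiates $t_G$ and $t_R$ directly, obtains the same expressions $-\alpha/(\beta(\alpha-\beta))$ and $-\alpha/((1-\beta)(\alpha-(1-\beta)))$, and reads off the sign from the denominator under the stated hypotheses. Your additional remark that the hypotheses coincide with the domain of definition of the logarithms is a nice touch the paper leaves implicit.
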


\begin{proof}
	This immediately follows from calculating the derivative of each threshold
	\begin{equation*}
		\dfrac{\partial t_{G}}{\partial \beta} = -\frac{\alpha}{\beta(\alpha-\beta)} \hspace{1cm} -\dfrac{\partial t_{R}}{\partial \beta} = -\frac{\alpha}{(1-\beta)(\alpha-(1-\beta))};
	\end{equation*}
	the first of which is negative if $\alpha> \beta$, the second if $\alpha>(1-\beta)$.
\end{proof}

We may now commence with the main proof. 

\begin{proof}
	We begin by proving claims \ref{item:beta_1_star} and \ref{item:beta_1_hat}.  First, notice that $BR_{G}(\epsilon_{R})=1$  regardless of the realisation $\epsilon_{R}$ if and only if $t_{G}<\underline{a}$ - so that the upper bound on the level of shock at which the government never fights is always smaller than the smallest possible shock realisation.  Similarly, $BR_{G}(\epsilon_{R})=0$ $\forall \epsilon_{R}\in [\underline{a},\overline{a}]$ if and only if $t_{G}>\overline{a}$. 
	
	From \Cref{lem:thresh_mono} we know  that $t_{G}$ is monotonically decreasing in $\beta$ on the range $(0,\alpha)$.  Thus if there is some ${\beta_{G}}^{+}$ such that $t_{G}|_{{\beta_{G}}^{+}} = \underline{a}$, then $t_{G} < \underline{a}$ iff $\beta>{\beta_{G}}^{+}$. Similarly, if there is some  ${\beta_{G}}^{-}$ such that $t_{G}|_{{\beta_{G}}^{-}}  = \overline{a}$, then $t_{G}>\overline{a}$ iff $\beta<{\beta_{G}}^{-}$.
	
	Now, notice that $\lim_{\beta \to 0}t_{G} = \infty$, and $\lim_{\beta\to\alpha^{-}}t_{G} = \infty$.  Since $t_{G}$ is clearly continuous in $\beta$, by Rolle's theorem we know that ${\beta_{G}}^{+}$ and ${\beta_{G}}^{-}$ must exist.	By the monotonicity of $t_{G}$, we know that the values of ${\beta_{G}}^{+}$ and ${\beta_{G}}^{-}$ are unique.
	
	Indeed, we may calculate ${\beta_{G}}^{+}$ explicitly:
	\begin{align}
		\nonumber &\log\left(\frac{\alpha-{\beta_{G}}^{+}}{{\beta_{G}}^{+}}\right) + (\expyg-\expyr) = \underline{a}\\
		&\Rightarrow {\beta_{G}}^{+} = \frac{\alpha}{\exp(\expyr-\expyg+\underline{a})+1}\label{eq:beta_1_bound}.
	\end{align}   
	Notice that ${\beta_{G}}^{+}<\alpha$, and since $BR_{G}(\epsilon_{R})= \text{Accept}$ if $\beta>\alpha$, we have that $BR_{G}(\epsilon_{R})=\text{Accept}$ on $[\beta_{1^{*}},\alpha]\cup[\alpha,1] = [{\beta_{G}}^{+},1]$. Thus we have found the interval on which the government always accepts the proposed transfer. Similarly, the government always fights if $\beta \in [0,{\beta_{G}}^{-}]$. 
	
	We may also calculate ${\beta_{G}}^{-}$ explicitly:
	\begin{align}
		\nonumber &\log\left(\frac{\alpha-{\beta_{G}}^{-}}{{\beta_{G}}^{-}}\right) + (\expyg-\expyr) = \overline{a}\\
		&\Rightarrow {\beta_{G}}^{-} = \frac{\alpha}{\exp(\expyr-\expyg+\overline{a})+1}.
	\end{align}
	
	The government will thus never accept $\beta$ if $\beta \in [0,{\beta_{G}}^{-}]$. 
	
	We may use a similar logic to prove claims \ref{item:beta_2_hat} and \ref{item:beta_2_star}. From \Cref{lem:thresh_mono}, we have that $t_{R}$ is monotonically decreasing in $\beta$ on $(1-\alpha,1)$. We have that $BR_{G}(\epsilon_{R})=\text{Accept}$ for all realisations of $\epsilon_{R}$ in $[\underline{a},\overline{a}]$ iff $t_{R}>\overline{a}$.  Similarly, $BR_{G}(\epsilon_{R})=\text{Fight}$ $\forall \epsilon_{R}\in [\underline{a},\overline{a}]$ iff $t_{R}<\underline{a}$. 
	
	As above, the existence of ${\beta_{R}}^{+}$ and ${\beta_{R}}^{-}$ follows from the fact that $\lim_{\beta ^{+}\to (1-\alpha)} t_{R} = \infty$ and $\lim_{\beta \to 1}t_{R} = \infty$ in conjunction with Rolle's theorem. 
	
	We may calculate ${\beta_{R}}^{+}$ and ${\beta_{R}}^{-}$ explicitly:
	\begin{align}
		\nonumber&\log\left(\frac{1-{\beta_{R}}^{+}}{\alpha-(1-{\beta_{R}}^{+})}\right) + (\expyg-\expyr) = \underline{a}\\
		&\Rightarrow {\beta_{R}}^{+} = 1 - \frac{\alpha}{\exp(\expyg-\expyr-\underline{a})+ 1}.\\
		\nonumber&\log\left(\frac{1-{\beta_{R}}^{-}}{\alpha-(1-{\beta_{R}}^{-})}\right) + (\expyg-\expyr) = \overline{a}\\
		&\Rightarrow {\beta_{R}}^{-} = 1 - \frac{\alpha}{\exp(\expyg-\expyr-\overline{a})+ 1} \label{eq:beta_2_bound};
	\end{align}
	Notice that $1-{\beta_{R}}^{-}<\alpha$, and since $BR_{G}(\epsilon_{R})=1$ if $1-\beta<\alpha$, we have that $BR_{G}(\epsilon_{R})=1$ on $[0,1-\alpha]\cup[1-\alpha,{\beta_{R}}^{-}] = [0,{\beta_{R}}^{-}]$. Thus we have found the interval of $\beta$ for which the rebel group never fights. Similarly, the rebel group will always fight if $\beta  \in [{\beta_{R}}^{+},1]$.
\end{proof}

\subsubsection{Proof of \Cref{thm:peace_threshold}}
\begin{proof}
	For notational brevity, denote $x = \expyg-\expyr$. We want to find the threshold on $\overline{a}-\underline{a}$ for which ${\beta_{R}}^{-}\geq {\beta_{G}}^{+}$. Specifically, we are looking for pairs $(\underline{a}$, $\overline{a})$ which satisfy
	
	\begin{equation}\label{eq:a_thresh}
		1-\frac{\alpha}{1+e^{(x-\overline{a})}} \geq \frac{\alpha}{1+e^{(-x+\underline{a})}}.
	\end{equation}
	
	We will have that the value for $a = \underline{a}-\overline{a}$ is unique if ${\beta_{R}}^{-}-{\beta_{G}}^{+}$ is monotonically decreasing in $a$. Indeed, notice that 
	\begin{equation*}
		\frac{\partial {\beta_{R}}^{-}}{\partial a} = -\frac{\alpha}{4\cosh^{2}\left( \frac{a+\underline{a}-\expyg+\expyr}{2}\right)},
	\end{equation*}
	which is always negative. Thus ${\beta_{R}}^{-}$ is monotonically decreasing in $a$. Conversely, we have that 
	\begin{equation*}
		\frac{\partial {\beta_{G}}^{+}}{\partial a} = \frac{\alpha \exp(\expyr-\expyg+a-\overline{a})}{\left(\exp(\expyr-\expyg+a-\overline{a})+1\right)^{2}},
	\end{equation*}
	which is always positive. Thus any root of the difference between the two functions is unique.
	
	So all that remains is to prove that such a root always exists. It is clear that this function is continuous in $\overline{a}-\underline{a}$, and so we may apply Rolle's theorem. Notice that we may factorise \ref{eq:a_thresh} and multiply through by $e^{a}$ to obtain:
	\begin{equation*}
		{\beta_{R}}^{-}-{\beta_{G}}^{+} = (1-2\alpha)e^{a} + (1-\alpha)e^{\frac{a}{2}}\left(e^{x-\left(\frac{\overline{a}+\underline{a}}{2}\right)} + \frac{1}{e^{x-\left(\frac{\overline{a}+\underline{a}}{2}\right)}}\right) +1.
	\end{equation*}
	
	Notice that the RHS of the inequality is a concave-down quadratic polynomial in $e^{\frac{a}{2}}$. Thus, if this function is positive on any part of its domain, it must have two roots between which it is positive.   For brevity, we shall denote the second term as $\eta + 1/\eta$. Thus we have that the roots are given 
	\begin{equation*}
		z_{+,-} = \frac{-(1-\alpha)\left(\eta + \frac{1}{\eta}\right) \pm \sqrt{(1-\alpha)^{2}}\left(\eta + \frac{1}{\eta}\right)^{2} -4(1-2\alpha)}{2(1-2\alpha)}.
	\end{equation*}
	Notice that $\sqrt{(1-\alpha)^{2}\left(\eta + \frac{1}{\eta}\right)^{2} -4(1-2\alpha)} \geq  |-(1-\alpha)\left(\eta + \frac{1}{\eta}\right)|$. Thus this polynomial has exactly one positive and one negative root. We can rule out the negative root as $e^{\frac{a}{2}}$ must always be positive over the real numbers. Notice that because the denominator is negative, we have that $z_{-}$ is the positive root. Thus we have that $	{\beta_{R}}^{-}-{\beta_{G}}^{+}$ is positive if $a$ is such that $e^{a/2}>z_{-}$, and negative otherwise. Since the exponential function is increasing in $a$, we have that there is a maximum value of $a$ for which ${\beta_{R}}^{-}-{\beta_{G}}^{+}$ is positive.
\end{proof}

\begin{cor}\label{cor:beta_vals}
  We have that\todo{Add in small comparative statics result in $\alpha$, max difference between each group's achieved conflict capacity} 
	\begin{align*}
		{\beta_{G}}^{+}&=\frac{\alpha}{\exp(\expyr-\expyg+\underline{a})+1} \hspace{1cm} {\beta_{R}}^{-} =1 - \frac{\alpha}{\exp(\expyg-\expyr-\overline{a})+ 1}\\
		{\beta_{G}}^{-} &= \frac{\alpha}{\exp(\expyr-\expyg+\overline{a})+1} \hspace{1cm} {\beta_{R}}^{+} =  1 - \frac{\alpha}{\exp(\expyg-\expyr-\underline{a})+ 1}.
	\end{align*}
\end{cor}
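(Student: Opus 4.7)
The plan is to read off each of the four closed-form expressions directly from the definitions established in \Cref{prop:beta_thresholds} combined with the explicit thresholds $t_G$ and $t_R$ given in \Cref{prop:fightthresh}. In fact, the bulk of this corollary has already been carried out inside the proof of \Cref{prop:beta_thresholds}; the role of the corollary is simply to collect the formulas in one place.

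First I would recall that, by \Cref{prop:fightthresh}, the government accepts at a given $\epsilon_R$ iff $\epsilon_R > t_G(\beta) := \log(\alpha/\beta-1)+(y_G-y_R)$, while the rebels accept iff $\epsilon_R < t_R(\beta) := -\log(\alpha/(1-\beta)-1)+(y_G-y_R)$. By \Cref{lem:thresh_mono}, $t_G$ is strictly decreasing in $\beta$ on $(0,\alpha)$ and $t_R$ is strictly decreasing in $\beta$ on $(1-\alpha,1)$. So the boundary values of $\beta$ in \Cref{prop:beta_thresholds} are characterized by pinning the corresponding threshold to the relevant endpoint of the support $[\underline{a},\overline{a}]$.

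Next I would solve the four resulting equations. For $\beta_G^+$, the government accepts for every $\epsilon_R\in[\underline{a},\overline{a}]$ exactly when $t_G(\beta)\le \underline{a}$, with equality at $\beta=\beta_G^+$; setting $\log(\alpha/\beta_G^+ -1)+(y_G-y_R)=\underline{a}$ and isolating $\beta_G^+$ yields
\[
\beta_G^+ = \frac{\alpha}{\exp(y_R-y_G+\underline{a})+1}.
\]
For $\beta_G^-$ the government fights for every shock, which requires $t_G(\beta)\ge \overline{a}$, with equality at the boundary; the same algebra with $\overline{a}$ in place of $\underline{a}$ gives the stated formula. For $\beta_R^-$ and $\beta_R^+$ I would solve $t_R(\beta)=\overline{a}$ and $t_R(\beta)=\underline{a}$ respectively, using the explicit form of $t_R$; in each case exponentiating and rearranging gives $1-\beta = \alpha/(\exp(y_G-y_R-\epsilon)+1)$ with the appropriate endpoint.

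There is essentially no obstacle here beyond bookkeeping; the only point worth a sentence of care is confirming that each root lies in the regime where $t_G$ (respectively $t_R$) is monotone, so that the solution is unique. This is immediate from the limits $\lim_{\beta\to 0^+}t_G=\lim_{\beta\to\alpha^-}t_G=\infty$ and the symmetric limits for $t_R$, together with the derivative signs in \Cref{lem:thresh_mono}, which together guarantee that each of the four equations has exactly one admissible solution. The explicit formulas then follow by direct algebraic inversion of the logistic expression in each threshold.
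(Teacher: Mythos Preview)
Your proposal is correct and matches the paper's approach exactly: the four formulas are derived inside the proof of \Cref{prop:beta_thresholds} by setting $t_G(\beta)$ equal to $\underline{a}$ and $\overline{a}$ and $t_R(\beta)$ equal to $\overline{a}$ and $\underline{a}$, and the corollary merely collects them. One small slip: you write $\lim_{\beta\to 0^+}t_G=\lim_{\beta\to\alpha^-}t_G=\infty$, but the second limit is $-\infty$; this does not affect your argument since uniqueness already follows from the strict monotonicity in \Cref{lem:thresh_mono}.
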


\subsubsection{Proof of \Cref{cor:a_crit_symmetric}}
The proof of \Cref{cor:a_crit_symmetric} is below.
\begin{proof}
	\emph{Closed form expression for $a_{crit}$:} We now have that ${\beta_{R}}^{-} \geq {\beta_{G}}^{+}$ iff\todo[restructure]{This is probably unnecessary and could be dropped in directly from the previous proof.} 
	\begin{align}
		\nonumber  1-\frac{\alpha}{e^{x-a}+1}  &\geq \frac{\alpha}{e^{-x-a}+1}\\
		\nonumber \left(1+e^{x-a}\right)\left(1+e^{-x-a}\right) - \alpha\left(1+e^{-x-a}\right) &\geq \alpha\left(1+e^{x-a}\right)\\
		\nonumber \iff (1-2\alpha)e^{2a} + (1-\alpha)\left(e^{x}+e^{-x}\right)e^{a} + 1 &\geq 0.		
	\end{align}
	
	Notice that the condition in the final line is simply a quadratic equation in $e^{a}$. Let $z \equiv e^{a}$. We then have that the roots of the quadratic are given by:
	\begin{equation*}
		z_{1,2} = \frac{\zeta \pm \sqrt{\zeta^{2} -4(1-2\alpha)}} {2(1-2\alpha)},
	\end{equation*}
	where $\zeta = (\alpha-1)\left(e^{x}+e^{-x}\right)$. Note that $\zeta<0$ as $\alpha<1$.
	
	Now, since $e^{a}$ is always positive, we need to focus on finding the positive root. Notice that if $(1-2\alpha)>0$ (so that $\alpha>\frac{1}{2}$) then $|\zeta|< \sqrt{\zeta^{2} -4(1-2\alpha)}$, which implies that $z_{1,2}<0$. This means that the roots of the polynomial are both negative and so $(z-z_1)(z-z_2) >0$ $\forall z>0$, as the polynomial is concave up. Thus a peace-guaranteeing solution always exists. 
	
	If $\alpha \geq \frac{1}{2}$ then $1-2(\alpha)\leq 0$, which means that $|\zeta|> \sqrt{\zeta^{2} -4(1-2\alpha)}$. Thus $\zeta +  \sqrt{\zeta^{2} -4(1-2\alpha)} >0$, and $\zeta -  \sqrt{\zeta^{2} -4(1-2\alpha)} <0$. So the positive root is given by 
	\begin{equation*}
		z_2 = \frac{\zeta - \sqrt{\zeta^{2} -4(1-2\alpha)}} {2(1-2\alpha)}.
	\end{equation*}  
	
	Since the polynomial is concave down, we have that $(z-z_{1})(z-z_{2}) >0$ if $z_{1}<z<z_2$. Thus we have that
	\begin{align*}
		(1-2\alpha)e^{2a} + (1-\alpha)\left(e^{x}+e^{-x}\right)e^{a} + 1 &\geq 0\\
		\iff e^{a} &\leq  \frac{\zeta - \sqrt{\zeta^{2} -4(1-2\alpha)}} {2(1-2\alpha)}\\
		\iff a &\leq \log\left(\frac{\zeta - \sqrt{\zeta^{2} -4(1-2\alpha)}} {2(1-2\alpha)}\right),
	\end{align*}
	
	which gives the desired threshold. Thus $z_{2}$ Because the left-hand side is finite, and $a$ may be arbitrarily large, we have that for $a$ sufficiently large, the threshold is crossed.

    \emph{$a_{crit}$ is increasing in $|y_{G}-y_{R}|$:} Differentiating $z_2$ with respect to $|y_G-y_R|$ yields
    \begin{equation}
        \frac{(e^{2 |y_{G}-y_{R}|} -1)(1-\alpha)}{\sqrt{\alpha^{2}( e^{2 |y_{G}-y_{R}|} + 1)^{2} + 2\alpha( e^{2 |y_{G}-y_{R}|} -1)^{2}  + (e^{2 |y_{G}-y_{R}|} -  1)^{2}}},
    \end{equation}
    which is always positive.

    \emph{$a_{crit}$ is decreasing in $\alpha$:} Let $exp(y_{G}-y_{R}) + exp(y_{R}-y_{G}) = x$, and then differentiating $z_2$ with respect to $\alpha$ yields:
    \begin{align*}
        &\frac{  x^{2} \cdot \left(1 - \alpha\right) + x \sqrt{8 \alpha + x^{2} \left(\alpha - 1\right)^{2} - 4} - 4}{\left(x \left(\alpha - 1\right) - \sqrt{8 \alpha + x^{2} \left(\alpha - 1\right)^{2} - 4}\right) \sqrt{8 \alpha + x^{2} \left(\alpha - 1\right)^{2} - 4}} -\frac{2}{2\alpha-1}.
    \end{align*}
The denominator second term is negative as $\left(2 \alpha - 1\right)>0$ as long as $\alpha>1/2$. Moreover, the denominator of the first term is negative as $x(\alpha-1)<0$, as $x>0$ and $\alpha<1$. Now, notice that the minimum value of $x$ is 2, so that 
\begin{align*}
     &x^{2} \cdot \left(1 - \alpha\right) + x \sqrt{8 \alpha + x^{2} \left(\alpha - 1\right)^{2} - 4} - 4\\
     &\geq 4(1-\alpha)+2\sqrt{8\alpha + 4(\alpha-1)-4} -4
     =3\alpha \geq 0,
\end{align*}
so that the second term is also negative. (provided $\alpha\geq 2$).
\end{proof}

\subsection{Proofs for \Cref{sec:opt_beta}}
Let $\tilde{p}_{G}$ be the expected probability that the government wins the conflict, taken over all possible realisations of the shock:
\begin{equation*}
	\tilde{p}_{G} = \mathbb{E}_{\epsilon_{R}}[p_{G}]= \int_{-a}^{a} \frac{
		e^{\expyg}}{e^{\expyg}+e^{\expyr+\epsilon_{R}}}\mathrm{d}\epsilon_{R} = \frac{\alpha}{2a} \frac{\log\left(e^{\expyg-\expyr+a}+1\right)}{\log\left(e^{\expyg-\expyr-a}+1\right)}.
\end{equation*}

For notational brevity, we will denote the function $p_{G}(\expyg,\expyr,\epsilon_{R})$ by $p_{G}$, when the parameters in question are clear. We have that\todo{Try to align so that all the ifs line up as well}  

	\begin{equation}\label{eq:expec_pi_1}
		\mathbb{E}_{\epsilon_{R}}[u_{G}(\beta,\epsilon_{R})]=
		\begin{cases}
			&\alpha\tilde{p}_{G} \text{ if } \beta \in \left[0,{\beta_{G}}^{-}\right)\\
			& \alpha\mathbb{E}_{\epsilon_{R}}[p_{G}|t_{G}<\epsilon_{R}]\mathbb{P}(t_{G}<\epsilon_{R}) + \beta\mathbb{P}(t_{G}>\epsilon_{R}) \text{ if } \beta \in \left[{\beta_{G}}^{-},{\beta_{G}}^{+}\right)\\
			& \beta \text{ if } \beta \in [{\beta_{G}}^{+},{\beta_{R}}^{-}]\\
			& \alpha\mathbb{E}_{\epsilon_{R}}[p_{G}|t_{R}>\epsilon_{R}]\mathbb{P}(t_{R}>\epsilon_{R}) + \beta\mathbb{P}(t_{R}<\epsilon_{R}) \text{ if } \beta \in \left({\beta_{R}}^{-},{\beta_{R}}^{+}\right]\\
			&\alpha\tilde{p}_{G} \text{ if } \beta \in \left({\beta_{R}}^{+},1\right].
		\end{cases} 
	\end{equation}
\subsubsection{Proof for \Cref{prop:beta_unique}}
Before we commence with our proof, we need the following lemmas:

\begin{lemma}\label{lem:piece-wise}
	$\mathbb{E}_{\epsilon_{R}}[u_{G}(\beta,\epsilon_{R})]$ is piece-wise continuous in $\beta$. 
\end{lemma}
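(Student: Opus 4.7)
The plan is to note that piece-wise continuity only requires continuity on the interior of each of the five sub-intervals appearing in the definition \eqref{eq:expec_pi_1}, and then verify this piece by piece. On $[0,\beta_G^-)$ and on $(\beta_R^+,1]$, the function equals the constant $\alpha \tilde{p}_G$ (which does not depend on $\beta$), and on $[\beta_G^+,\beta_R^-]$ it equals $\beta$; continuity on these three pieces is therefore immediate.

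The content is in the two mixed pieces $[\beta_G^-,\beta_G^+)$ and $(\beta_R^-,\beta_R^+]$. First, I would recall from \Cref{prop:fightthresh} that on the relevant range of $\beta$ the thresholds
\begin{equation*}
    t_G(\beta) = \log\!\left(\tfrac{\alpha}{\beta}-1\right) + (\expyg-\expyr), \qquad t_R(\beta) = -\log\!\left(\tfrac{\alpha}{1-\beta}-1\right) + (\expyg-\expyr)
\end{equation*}
are well-defined and continuous (in fact smooth) functions of $\beta$, since $\beta\in(0,\alpha)$ for the former and $1-\beta\in(0,\alpha)$ for the latter. This is the key input; once the thresholds vary continuously in $\beta$, the rest follows from elementary properties of integrals of bounded continuous functions with respect to moving limits.

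Next, on $[\beta_G^-,\beta_G^+)$ I would rewrite the expected payoff in integral form using the uniform density $f(\epsilon_R)=\tfrac{1}{2\tilde a}$ on $[-\tilde a,\tilde a]$, so that
\begin{equation*}
    \mathbb{E}_{\epsilon_R}[u_G(\beta,\epsilon_R)] = \alpha\!\int_{t_G(\beta)}^{\tilde a} p_G(\epsilon_R)\,f(\epsilon_R)\,\mathrm{d}\epsilon_R + \beta\!\int_{-\tilde a}^{t_G(\beta)} f(\epsilon_R)\,\mathrm{d}\epsilon_R.
\end{equation*}
Since $p_G(\cdot)$ is bounded and continuous in $\epsilon_R$, both integrals are continuous functions of their upper/lower limits $t_G(\beta)$ by the fundamental theorem of calculus; composed with the continuous map $\beta\mapsto t_G(\beta)$ they are continuous in $\beta$, and the explicit prefactors $\alpha$ and $\beta$ are continuous. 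The same argument, substituting $t_R(\beta)$ for $t_G(\beta)$ and flipping the roles of the integration intervals, handles the fourth piece $(\beta_R^-,\beta_R^+]$.

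I do not expect any real obstacle: the statement does not require continuity across pieces, only within each piece, so there is nothing to check at the boundaries $\beta_G^-,\beta_G^+,\beta_R^-,\beta_R^+$. The only point to be mildly careful about is that the thresholds $t_G$ and $t_R$ are bounded (in fact stay in $[-\tilde a,\tilde a]$) on their respective intervals, which is exactly the defining property of $\beta_G^\pm$ and $\beta_R^\pm$ from \Cref{prop:beta_thresholds}; this guarantees that the integrals above stay over a non-degenerate sub-interval of the support and makes the continuity argument straightforward.
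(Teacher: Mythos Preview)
Your argument is correct for the standard meaning of ``piece-wise continuous'' (continuity on the interior of each piece), but that is not what the paper intends or proves. The paper's own proof of \Cref{lem:piece-wise} takes continuity on the interiors as ``clear'' and then spends its entire effort verifying continuity \emph{at the four boundary points} $\beta_G^-,\beta_G^+,\beta_R^-,\beta_R^+$, i.e.\ that the adjacent pieces agree there. In other words, despite the wording, the lemma is really asserting that the piecewise-defined function is globally continuous on $[0,1]$.

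This stronger reading is also what is used downstream. In the proof of \Cref{prop:beta_interval} the paper writes ``Because, by \Cref{lem:piece-wise}, the function is piece-wise continuous we may consider closed intervals when finding $\beta^{*}$,'' and later ``Because $\mathbb{E}_{\epsilon_{R}}[u_{G}(\beta,\epsilon_{R})]$ is piecewise continuous, the fact that its [derivative] is negative at $\beta_R^+$ means that there is some $\beta<\beta_R^+$ which is preferred to $\beta\in[\beta_R^+,1]$.'' Both inferences fail if there are jump discontinuities at the piece boundaries: you cannot freely close intervals, and a negative one-sided derivative at $\beta_R^+$ would not rule out a jump up to the value $\alpha\tilde p_G$ on $[\beta_R^+,1]$. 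So the boundary matching is the substantive content of the lemma, and your proposal omits it entirely. To complete the proof in the paper's sense you need to evaluate each mixed piece at its endpoints and check that it coincides with the neighbouring constant or linear piece; the paper does this by direct computation using the explicit formulas for $\beta_G^\pm,\beta_R^\pm$ from \Cref{cor:beta_vals}.
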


\begin{proof}
	It is clear that the function in \ref{eq:expec_pi_1} is continuous on the interior of each segment on which it is defined. We therefore need only show that is it continuous at the boundary of each interval.
	
	\underline{At ${\beta_{G}}^{-}$:} here we need to show that
	\begin{align*}
		\alpha p_{G} = \alpha\expec{\epsilon_{R}}{p_{G}|t_{G}({\beta_{G}}^{-})<\epsilon_{R}}\prob{t_{G}({\beta_{G}}^{-})<\epsilon_{R}} + {\beta_{G}}^{-}\prob{t_{G}({\beta_{G}}^{-})>\epsilon_{R}}.
	\end{align*}
	where $t_{1}({\beta_{G}}^{-})$ is the value of the upper bound on the shock for the government to be willing to declare war.  
	
	Indeed, we have that 
	\begin{align*}
		&\alpha\mathbb{E}_{\epsilon_{R}}[p_{G}|t_{G}({\beta_{G}}^{-})<\epsilon_{R}]\mathbb{P}(t_{G}({\beta_{G}}^{-})<\epsilon_{R}) + {\beta_{G}}^{-}\mathbb{P}(t_{G}({\beta_{G}}^{-})>\epsilon_{R}) \\
		=& \frac{\alpha}{2a} \int_{-a}^{t_{G}({\beta_{G}}^{-})}\frac{
			e^{\expyg}}{e^{\expyg}+e^{\expyr+\epsilon_{R}}}\mathrm{d}\epsilon_{R}  + {\beta_{G}}^{-} \left(1-\frac{t_{G}({\beta_{G}}^{-})+a}{2a}\right)\\
		=& \frac{\alpha}{2a}\left[a+\expyg-\expyr + \log\left(\frac{\alpha}{{\beta_{G}}^{-}}-1\right) -\log\left(e^{\expyg-\expyr}\frac{\alpha}{{\beta_{G}}^{-}}\right) -a + \log\left(e^{\expyg-\expyr+a}+1\right)\right] \\
		&+ {\beta_{G}}^{-} \left(1-\frac{\expyg-\expyr+a+\log\left(\frac{\alpha}{{\beta_{G}}^{-}}-1\right)}{2a}\right).
	\end{align*}
	
	Here, it is helpful to note that 
	\begin{equation*}
		\frac{\alpha}{{\beta_{G}}^{-}} = e^{\expyr-\expyg+a}+1.
	\end{equation*}
	
	Thus, the above simplifies to
	\begin{align*}
		&{\beta_{G}}^{-} \left(1-\frac{\expyg-\expyr+a+\expyr-\expyg+a}{2a}\right) + \frac{\alpha}{2a}\left[2a - \log\left(e^{\expyg-\expyr}\left(e^{\expyg-\expyr+a}+1\right)\right) -a+log\left(e^{\expyg-\expyr+a}+1\right)\right]\\
		&=\frac{\alpha}{2a}\log\left(\frac{e^{\expyg-\expyr+a}+1}{e^{\expyg-\expyr-a}+1}\right)
	\end{align*}
	
	\underline{At ${\beta_{G}}^{+}$:} similarly to the above, we need to show that 
	\begin{equation*}
		{\beta_{G}}^{+} = \alpha\expec{\epsilon_{R}}{p_{G}|t_{G}({\beta_{G}}^{+})<\epsilon_{R}}\prob{t_{G}({\beta_{G}}^{+})<\epsilon_{R}} + {\beta_{G}}^{+}\prob{t_{G}({\beta_{G}}^{+})>\epsilon_{R}}.
	\end{equation*}
	Indeed, we have that:
	\begin{align*}
		& \frac{\alpha}{2a}\left[a+\expyg-\expyr + \log\left(\frac{\alpha}{{\beta_{G}}^{+}}-1\right) -\log\left(e^{\expyg-\expyr}\frac{\alpha}{{\beta_{G}}^{+}}\right) -a + \log\left(e^{\expyg-\expyr+a}+1\right)\right] \\
		&+ {\beta_{G}}^{+} \left(1-\frac{\expyg-\expyr+a+\log\left(\frac{\alpha}{{\beta_{G}}^{+}}-1\right)}{2a}\right).
	\end{align*}
	It is helpful to note that:
	\begin{equation}
		\frac{\alpha}{{\beta_{G}}^{+}} = e^{\expyr-\expyg-a}+1.
	\end{equation}
	Thus, the above simplifies to
	\begin{align*}
		&\frac{\alpha}{2a}\left[a+\expyg-\expyr+\expyr-\expyg-a-\log\left(e^{\expyg-\expyg}\left(1+e^{\expyr-\expyg-a}\right)\right)-a+log\left(e^{\expyg-\expyr+a}+1\right)\right]\\
		& + {\beta_{G}}^{+}\left(1-\frac{\expyg-\expyr-a+\expyr-\expyg+a}{2a}\right)\\
		&= {\beta_{G}}^{+},
	\end{align*}
	as desired. 
	
	At ${\beta_{R}}^{-}$: here, we need to show that 
	\begin{equation*}
		{\beta_{R}}^{-} = \alpha\expec{\epsilon_{R}}{p_{G}|t_{R}({\beta_{R}}^{-})>\epsilon_{R}}\prob{t_{R}({\beta_{R}}^{-})>\epsilon_{R}} + {\beta_{R}}^{-}\prob{t_{R}({\beta_{R}}^{-})<\epsilon_{R}}.
	\end{equation*}
	
	Indeed, we have that:
	\begin{align*}
		&\alpha\expec{\epsilon_{R}}{p_{G}|t_{R}({\beta_{R}}^{-})>\epsilon_{R}}\prob{t_{R}({\beta_{R}}^{-})>\epsilon_{R}} + {\beta_{R}}^{-}\prob{t_{R}({\beta_{R}}^{-})<\epsilon_{R}}\\
		&= \frac{\alpha}{2a}\int_{t_{2}(({\beta_{R}}^{-}))}^{a}\frac{
			e^{\expyg}}{e^{\expyg}+e^{\expyr+\epsilon_{R}}}\mathrm{d}\epsilon_{R}+{\beta_{R}}^{-}\left(\frac{t_{R}({\beta_{R}}^{-})+a}{2a}\right)\\
		&=\frac{\alpha}{2a}\left[\log\left(1-{\beta_{R}}^{-}\right) + \log\left(\frac{1}{1-{\beta_{R}}^{-}}\right)\right] +\frac{{\beta_{R}}^{-}}{2a}\left[a+\expyg-\expyr-\log\left(\frac{1-\alpha-{\beta_{R}}^{-}}{{\beta_{R}}^{-}-1}\right)\right].
	\end{align*}
	Here, it is helpful to note that:
	\begin{equation*}
		\frac{1-\alpha-{\beta_{R}}^{-}}{{\beta_{R}}^{-}-1} = e^{\expyg-\expyr-a}.
	\end{equation*}
	Thus, the above simplifies to
	\begin{align*}
		&= \frac{{\beta_{R}}^{-}}{2a}\left(a+\expyg-\expyr-\expyg+\expyr+a\right)\\
		&={\beta_{R}}^{-},
	\end{align*}
	as desired.
	
	\underline{At ${\beta_{R}}^{+}$:} here we wish to show that
	\begin{equation*}
		\alpha p_{G} = \frac{\alpha}{2a}\left[\log\left(1-{\beta_{R}}^{-}\right) + \log\left(\frac{1}{1-{\beta_{R}}^{+}}\right)\right] +\frac{{\beta_{R}}^{+}}{2a}\left[a+\expyg-\expyr-\log\left(\frac{1-\alpha-{\beta_{R}}^{+}}{{\beta_{R}}^{+}-1}\right)\right].
	\end{equation*}
	Here, it is useful to note that 
	\begin{equation*}
		\frac{1-\alpha-{\beta_{R}}^{+}}{{\beta_{R}}^{+}-1} = e^{\expyg-\expyr+a}.
	\end{equation*}
	Thus the above simplifies to
	\begin{align*}
		&\frac{\alpha}{2a}\left[\log(\alpha) - \log\left(1+e^{\expyg-\expyr-a}\right) - \log(\alpha) + \log\left(1+e^{\expyg-\expyr+a}\right)\right] + \frac{{\beta_{R}}^{+}}{2a}\left[a+\expyg-\expyr-(a+\expyg-\expyr)\right]\\
		&=\alpha p_{G},
	\end{align*}
	and thus completes the proof.
\end{proof}

\begin{lemma}\label{lem:pos_deriv}
	The derivative of $\expec{\epsilon_{R}}{u_{G}(\beta,\epsilon_{R})}$ with respect to $\beta$ is positive on $\left({\beta_{G}}^{-}, {\beta_{G}}^{+}\right]$.
\end{lemma}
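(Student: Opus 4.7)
The plan is to write down the expected payoff explicitly on the interval, differentiate it, and exploit the fact that $t_G$ is defined by an indifference condition so an envelope-type cancellation occurs.

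First, I would observe that on $(\beta_G^-, \beta_G^+]$ we have $\beta \leq \beta_G^+ \leq \beta_R^-$, so by \Cref{prop:beta_thresholds} the rebels accept with certainty; the only possible source of war on this interval is a government rejection. By \Cref{prop:fightthresh}, this occurs precisely when $\epsilon_R \leq t_G(\beta) = \log(\alpha/\beta - 1) + (y_G - y_R)$. Writing $\epsilon_R$ as uniform with density $1/(2a)$ on $[-a, a]$, the relevant line of \eqref{eq:expec_pi_1} becomes
$$\mathbb{E}_{\epsilon_R}[u_G(\beta, \epsilon_R)] = \frac{\alpha}{2a}\int_{-a}^{t_G(\beta)} p_G(\epsilon_R)\, d\epsilon_R + \frac{a - t_G(\beta)}{2a}\, \beta.$$

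Next I would differentiate with respect to $\beta$ via Leibniz's rule, which yields
$$\frac{\partial \mathbb{E}_{\epsilon_R}[u_G]}{\partial \beta} = \frac{1}{2a}\bigl(\alpha\, p_G(t_G) - \beta\bigr)\frac{\partial t_G}{\partial \beta} + \frac{a - t_G(\beta)}{2a}.$$
The key step is the envelope observation: by the definition of $t_G$, the government is indifferent between fighting and accepting at $\epsilon_R = t_G$, i.e.\ $\alpha\, p_G(t_G) = \beta$. A direct substitution of $t_G$ into the logit formula gives $p_G(t_G) = \beta/\alpha$, confirming the cancellation. Hence the first term vanishes and the derivative reduces to $(a - t_G(\beta))/(2a)$.

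Finally, I would invoke \Cref{lem:thresh_mono}: $t_G$ is strictly decreasing in $\beta$ on $(0, \alpha)$, and $t_G(\beta_G^-) = a$ by the defining equation of $\beta_G^-$. Therefore $\beta > \beta_G^-$ yields $t_G(\beta) < a$, so the derivative is strictly positive throughout $(\beta_G^-, \beta_G^+]$. I do not expect a genuine obstacle here; the only subtle move is noticing the envelope cancellation, after which the claim collapses to a one-line monotonicity check using results already in the appendix.
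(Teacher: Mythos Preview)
Your proof is correct and arrives at exactly the same derivative expression as the paper, namely $(a - t_G(\beta))/(2a) = \tfrac{1}{2a}\bigl(a + y_R - y_G + \log\tfrac{\beta}{\alpha-\beta}\bigr)$, but the route is different. The paper first expands the expected payoff explicitly (evaluating the integral in closed form) and then differentiates the resulting expression term by term, after which it solves the inequality algebraically to recover the threshold $\beta_G^{-}$. You instead differentiate under the integral sign via Leibniz's rule and exploit the indifference condition $\alpha\,p_G(t_G)=\beta$ to kill the boundary term, reducing the derivative to $(a - t_G)/(2a)$; positivity then falls out immediately from $t_G(\beta_G^{-})=a$ and the monotonicity of $t_G$ already recorded in \Cref{lem:thresh_mono}. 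Your envelope-style argument is cleaner and explains conceptually why the derivative takes such a simple form, while the paper's brute-force computation has the minor advantage of producing the explicit closed form without any appeal to the threshold definition.
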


\begin{proof}
	On $\left[{\beta_{G}}^{-}, {\beta_{G}}^{+}\right]$, the derivative is given by:
	\begin{align}
		\nonumber \dfrac{\partial}{\partial\beta}\expec{\epsilon_{R}}{u_{G}(\beta,\epsilon_{R})} &= \nonumber\dfrac{\partial}{\partial\beta}\frac{\beta}{2a}\left[a - y_{1} + y_{2} - \log{\left(\frac{\alpha}{\beta} - 1 \right)}\right] \\&+ \nonumber \frac{\alpha}{2a}\left[\log\left(\frac{\alpha}{\beta}-1\right) -\log\left(\frac{\alpha}{\beta}\right)+\log{\left(e^{\expyg-\expyr+a}+1\right)}\right]\\
		&= \frac{1}{2a}\left(a+\expyr-\expyg + \log\left(\frac{\beta}{\alpha-\beta}\right)\right)\label{eq:2_no_fight_diff}.
	\end{align}
	Note that the logarithm is well defined as $\alpha>{\beta_{G}}^{+}$.  We have that \ref{eq:2_no_fight_diff} is strictly positive iff
	\begin{align*}
		\log\left(\frac{\beta}{\alpha-\beta}\right) &> \expyg-\expyr-a\\
		\iff \frac{\alpha}{\beta}-1 &< e^{\expyr-\expyg+a}\\
		\iff \frac{1}{\beta} &< \frac{e^{\expyg-\expyg+a}+1}{\alpha}\\
		\iff \beta &> \frac{\alpha}{e^{\expyg-\expyr+1}+1} = {\beta_{G}}^{-}.
	\end{align*}
	Thus the result holds.
\end{proof}

\begin{lemma}\label{lem:neg_deriv}
	Suppose $\alpha<1$. The derivative of $\mathbb{E}_{\epsilon_{R}}[u_{G}(\beta,\epsilon_{R})]$ with respect to $\beta$ is strictly negative at ${\beta_{R}}^{+}.$
\end{lemma}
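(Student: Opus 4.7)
The plan is to differentiate $\expec{\epsilon_{R}}{u_{G}(\beta,\epsilon_{R})}$ on the interval $\left(\beta_{R}^{-},\beta_{R}^{+}\right]$, evaluate the derivative at the right endpoint, and use the rebel indifference condition that defines $\beta_{R}^{+}$ to simplify. On this interval, when $\epsilon_{R}<t_{R}$ the rebels accept and the government receives $\beta$, whereas when $\epsilon_{R}\geq t_{R}$ the rebels fight and the government receives $\alpha p_{G}(\epsilon_{R})$. With $\epsilon_{R}$ uniform on $[-a,a]$ this yields
\begin{equation*}
\expec{\epsilon_{R}}{u_{G}(\beta,\epsilon_{R})}=\frac{\alpha}{2a}\int_{t_{R}}^{a}\frac{e^{y_{G}}}{e^{y_{G}}+e^{y_{R}+\epsilon_{R}}}\,\mathrm{d}\epsilon_{R}+\beta\cdot\frac{t_{R}+a}{2a}.
\end{equation*}

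Applying Leibniz's rule and collecting the terms proportional to $\partial t_{R}/\partial\beta$ gives
\begin{equation*}
\frac{\partial}{\partial\beta}\expec{\epsilon_{R}}{u_{G}(\beta,\epsilon_{R})}=\frac{1}{2a}\,\frac{\partial t_{R}}{\partial\beta}\bigl(\beta-\alpha p_{G}(t_{R})\bigr)+\frac{t_{R}+a}{2a}.
\end{equation*}
At $\beta=\beta_{R}^{+}$, \Cref{prop:beta_thresholds} gives $t_{R}=\underline{a}=-a$, so the second term vanishes. It remains to evaluate $\beta_{R}^{+}-\alpha p_{G}(-a)$.

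The key step is recognising that $\beta_{R}^{+}$ is precisely the value of $\beta$ at which the rebels are indifferent between accepting and fighting when $\epsilon_{R}=-a$. That indifference condition says $1-\beta_{R}^{+}=\alpha\bigl(1-p_{G}(-a)\bigr)$, which rearranges to $\beta_{R}^{+}-\alpha p_{G}(-a)=1-\alpha$. One can double-check this directly from the closed form $\beta_{R}^{+}=1-\alpha/(e^{y_{G}-y_{R}+a}+1)$ given in \Cref{cor:beta_vals}. Substituting in,
\begin{equation*}
\frac{\partial}{\partial\beta}\expec{\epsilon_{R}}{u_{G}(\beta,\epsilon_{R})}\bigg|_{\beta_{R}^{+}}=\frac{1-\alpha}{2a}\cdot\frac{\partial t_{R}}{\partial\beta}\bigg|_{\beta_{R}^{+}}.
\end{equation*}

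Finally, since $\beta_{R}^{+}>1-\alpha$ (which is immediate from the closed form whenever $\alpha<1$), \Cref{lem:thresh_mono} gives $\partial t_{R}/\partial\beta<0$ at $\beta_{R}^{+}$, and $1-\alpha>0$ by hypothesis, so the product is strictly negative, as desired. The main obstacle is bookkeeping: making sure the sign conventions for $t_{R}$ and the orientation of the integral are consistent, and then spotting that the bracket $\beta-\alpha p_{G}(t_{R})$ collapses to $1-\alpha$ via the indifference condition rather than trying to push through the closed-form algebra in full.
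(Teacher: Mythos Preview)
Your proof is correct and follows the same overall plan as the paper: differentiate the expected payoff on $(\beta_R^{-},\beta_R^{+}]$ and evaluate at $\beta_R^{+}$ to obtain a product of $(\alpha-1)$ with a positive factor. The execution, however, is organised differently. The paper first integrates the payoff explicitly, writes the derivative as a closed-form expression in $\beta$, and then substitutes the formula for $\beta_R^{+}$ from \Cref{cor:beta_vals} to simplify algebraically. You instead keep the derivative in the Leibniz form $\tfrac{1}{2a}\,\tfrac{\partial t_R}{\partial\beta}\bigl(\beta-\alpha p_G(t_R)\bigr)+\tfrac{t_R+a}{2a}$ and exploit two structural facts: that $t_R(\beta_R^{+})=-a$ kills the second term, and that the rebel indifference condition at $t_R$ collapses $\beta-\alpha p_G(t_R)$ to $1-\alpha$ (this identity in fact holds for every $\beta$ on the interval, not only at $\beta_R^{+}$). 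You then invoke \Cref{lem:thresh_mono} for the sign of $\partial t_R/\partial\beta$. This buys you a shorter and more transparent argument that avoids the explicit integration and closed-form substitution; the paper's route is more mechanical but arrives at the same expression.
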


\begin{proof}
	On $\left({\beta_{R}}^{-},{\beta_{R}}^{+}\right]$, the derivative of the expected pay-off function with respect to $\beta$ is:
	\begin{equation*}
		\frac{\alpha}{2a(1-\beta)}\left(\frac{\beta}{1-\alpha-\beta}\right) + \frac{1}{2a}\left(a+\expyg-\expyr - \log\left(\frac{1-\alpha-\beta}{\beta-1}\right)\right).
	\end{equation*}
	Thus, at ${\beta_{R}}^{+}$, it evaluates to
	\begin{align*}
		&\frac{\alpha}{2a(1-{\beta_{R}}^{+})}\left(\frac{\alpha-1}{\alpha}\left(1+e^{\expyr-\expyg+a}\right)\right) + \frac{1}{2a}\left(a+\expyg-\expyr-\log\left(e^{\expyg-\expyr+1}\right)\right)\\
		&= \left(\alpha-1\right)\frac{1+e^{\expyr-\expyg+a}}{2a(1-{\beta_{R}}^{+})}.
	\end{align*} 
	Since $(\alpha-1)<0$ $\forall \alpha$, and $\frac{1+e^{\expyr-\expyg+a}}{2a(1-\beta)}>0$ $\forall \beta$, the result holds.
\end{proof}

\begin{prop}\label{prop:beta_interval}
    $\beta^{*}$ lies in $\left[\beta_{R}^{-},\beta_{R}^{+}\right)$.
\end{prop}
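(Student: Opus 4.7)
The plan is to combine the three preceding lemmas (\Cref{lem:piece-wise}, \Cref{lem:pos_deriv}, and \Cref{lem:neg_deriv}) with the explicit piecewise form of the expected payoff given in equation \ref{eq:expec_pi_1} to sandwich any maximiser into $[\beta_{R}^{-},\beta_{R}^{+})$. The strategy is to argue strict monotonicity on each side: increasing on $[\beta_{G}^{-},\beta_{R}^{-}]$ and strictly decreasing immediately to the left of $\beta_{R}^{+}$, while showing the two constant tails on $[0,\beta_{G}^{-})$ and $(\beta_{R}^{+},1]$ are dominated.

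To establish the lower bound $\beta^{*}\geq \beta_{R}^{-}$, I would first note that on $[\beta_{G}^{-},\beta_{G}^{+})$ the derivative of $\mathbb{E}_{\epsilon_{R}}[u_{G}(\beta,\epsilon_{R})]$ is strictly positive by \Cref{lem:pos_deriv}, and on $[\beta_{G}^{+},\beta_{R}^{-}]$ the expected payoff is simply $\beta$ (from the peace-guaranteeing branch of \ref{eq:expec_pi_1}), which is also strictly increasing. \Cref{lem:piece-wise} then lets me glue these two segments continuously at $\beta_{G}^{+}$, so the expected payoff is strictly increasing across the whole of $[\beta_{G}^{-},\beta_{R}^{-}]$. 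Continuity at $\beta_{G}^{-}$ further ties this rising curve to the constant value $\alpha\tilde{p}_{G}$ attained on $[0,\beta_{G}^{-})$, which implies $\alpha\tilde{p}_{G}<\beta_{R}^{-}$ and therefore that no $\beta$ in $[0,\beta_{R}^{-})$ can be optimal.

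To establish the upper bound $\beta^{*}<\beta_{R}^{+}$, I would use that on $(\beta_{R}^{+},1]$ the payoff is again the constant $\alpha\tilde{p}_{G}$, so by \Cref{lem:piece-wise} the left-hand value at $\beta_{R}^{+}$ is also $\alpha\tilde{p}_{G}$. \Cref{lem:neg_deriv} then gives a strictly negative left-derivative at $\beta_{R}^{+}$, so there exists $\beta\in(\beta_{R}^{-},\beta_{R}^{+})$ with payoff strictly above $\alpha\tilde{p}_{G}$. Hence no point in $[\beta_{R}^{+},1]$ can maximise the payoff, and combining with the lower bound yields $\beta^{*}\in[\beta_{R}^{-},\beta_{R}^{+})$.

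The main obstacle is not a deep computation but the bookkeeping required to stitch together four pieces of the payoff function and compare them coherently: one must verify that the two constant tails agree in value, that continuity at the interior boundaries $\beta_{G}^{+}$ and $\beta_{R}^{-}$ preserves strict monotonicity across the kink, and that the sign information from \Cref{lem:pos_deriv} and \Cref{lem:neg_deriv} is strong enough to rule out boundary maximisers at $\beta_{G}^{-}$ and at $\beta_{R}^{+}$ respectively. Once these items are lined up, the conclusion follows from a standard compactness/continuity argument applied to $\mathbb{E}_{\epsilon_{R}}[u_{G}(\beta,\epsilon_{R})]$ on $[0,1]$.
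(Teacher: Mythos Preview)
Your proposal is correct and follows essentially the same approach as the paper: both arguments use \Cref{lem:piece-wise}, \Cref{lem:pos_deriv}, and \Cref{lem:neg_deriv} to show the expected payoff is (weakly) increasing on $[0,\beta_{R}^{-}]$ and that the constant tail on $[\beta_{R}^{+},1]$ is strictly dominated by some interior point via the negative derivative at $\beta_{R}^{+}$. The only cosmetic difference is that the paper phrases the left-hand part as ``weakly increasing on $[0,\beta_{R}^{-}]$'' in one step, whereas you break it into the constant piece on $[0,\beta_{G}^{-})$ and the strictly increasing piece on $[\beta_{G}^{-},\beta_{R}^{-}]$ before gluing.
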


\begin{proof}
	We may prove the above statement by finding the optimal choice of $\beta$ on each continuous segment of the piecewise function in \Cref{eq:expec_pi_1}. We may then compare the payoff at each optimal $\beta$ on each segment and determine the global optimal value(s), $\beta^{*}$. Because, by \Cref{lem:piece-wise}, the function is piece-wise continuous we may consider closed intervals when finding $\beta^{*}$. This makes solving the optimisation problem easier. 
	
	\Cref{lem:pos_deriv} combined with the fact that payoff is constant on $[0,{\beta_{G}}^{-}]$ (as war occurs with certainty), and linearly increasing on $[{\beta_{G}}^{+},{\beta_{R}}^{-}]$ (as both parties will accept all values in this interval), means that $\mathbb{E}_{\epsilon_{R}}[u_{G}(\beta,\epsilon_{R})]$ is weakly increasing on the entire interval $[0,{\beta_{R}}^{-}]$. Thus $\beta^{*}$ will never be less than ${\beta_{R}}^{-}$.
	
	We know from \Cref{lem:neg_deriv} that the optimum will never occur on the upper bound of $\left[\beta_{R}^{-},\beta_{R}^{+}\right]$ as the derivative here is negative. Moreover, the government is indifferent between all $\beta \in [{\beta_{R}}^{+},1]$ as war once again occurs with certainty here. Because  $\mathbb{E}_{\epsilon_{R}}[u_{G}(\beta,\epsilon_{R})]$ is piecewise continuous, the fact that its function is negative at ${\beta_{R}}^{+}$ means that there is some $\beta<{\beta_{R}}^{+}$ which is preferred to $\beta \in [{\beta_{R}}^{+},1]$. We therefore have the desired result. 
\end{proof}

\begin{lemma}\label{lem:interior_unique}
	If $\beta^{*}$ is interior (i.e. in $\left({\beta_{R}}^{-},{\beta_{R}}^{+}\right)$), then its value is unique.
\end{lemma}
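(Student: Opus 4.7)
The plan is to show that every interior critical point of the government's expected payoff on $({\beta_{R}}^{-},{\beta_{R}}^{+})$ is a strict local maximum, which, combined with a Rolle-type argument, rules out the coexistence of two distinct interior maxima. Write $f(\beta) := \expec{\epsilon_{R}}{u_{G}(\beta,\epsilon_{R})}$ throughout.

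First, I would rewrite the derivative computed in the proof of \Cref{lem:neg_deriv} in the cleaner form
\begin{equation*}
f'(\beta) = \frac{t_{R}(\beta)+a}{2a} - \frac{\alpha(1-\alpha)}{2a(1-\beta)(\alpha-1+\beta)},
\end{equation*}
using the rebels' indifference condition $p_{G}(t_{R}(\beta))=(\alpha-1+\beta)/\alpha$ together with $t_{R}'(\beta)=-\alpha/[(1-\beta)(\alpha-1+\beta)]$. Writing $h(\beta):=(1-\beta)(\alpha-1+\beta)$, the first-order condition becomes $h(\beta)(t_{R}(\beta)+a)=\alpha(1-\alpha)$. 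Differentiating once more yields
\begin{equation*}
f''(\beta) = \frac{\alpha}{2a\,h(\beta)^{2}}\bigl(-h(\beta)+(1-\alpha)h'(\beta)\bigr),
\end{equation*}
with $h'(\beta)=2-2\beta-\alpha$.

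At any interior critical point, substituting the first-order condition for $h(\beta)$ lets me express the bracketed term purely in $t_{R}(\beta)+a$ and $\beta$. The key step is then to show that this simplified expression is strictly negative on the relevant range, so that every interior zero of $f'$ is a strict local maximum. Uniqueness follows by a standard Rolle-type argument: if $f$ admitted two distinct interior local maxima $\beta_{1}<\beta_{2}$, then since $f$ is $C^{1}$ on the open interval, $f'$ would have to vanish somewhere in $(\beta_{1},\beta_{2})$ at a point corresponding to a local minimum, contradicting the previous sentence.

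The main obstacle is the sign analysis of $f''$ at a critical point. A global strict-concavity argument on $({\beta_{R}}^{-},{\beta_{R}}^{+})$ is \emph{not} available: $f''$ can be positive when $h'$ is large relative to $h$ (for instance for $\beta$ close to $1-\alpha$, where $h$ vanishes). The FOC substitution is therefore essential for pinning down the sign exactly where it matters for uniqueness.
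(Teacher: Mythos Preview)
Your plan has a genuine gap at the ``key step.'' The FOC substitution you propose is circular: from $h(\beta)\,(t_{R}+a)=\alpha(1-\alpha)$ one obtains $\alpha/(t_{R}+a)=h(\beta)/(1-\alpha)$, and feeding this back into the bracketed term of $f''$ returns precisely $-h(\beta)+(1-\alpha)h'(\beta)$. So the substitution gives no leverage on the sign of $f''$ at a critical point beyond what the quadratic
\[
-h(\beta)+(1-\alpha)h'(\beta)\;=\;\beta^{2}+(3\alpha-4)\beta+(\alpha-1)(\alpha-3)
\]
already tells you, and you have yourself noted that this expression is not globally negative.

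Worse, the claim that \emph{every} interior critical point is a strict local maximum is false in the relevant parameter range. The quadratic above has roots $\beta_{\pm}=\tfrac{1}{2}(4-3\alpha)\pm\tfrac{1}{2}\sqrt{5\alpha^{2}-8\alpha+4}$; the larger root $\beta_{+}$ always lies at or above $1$ and hence outside $({\beta_{R}}^{-},{\beta_{R}}^{+})$, but $\beta_{-}$ can sit strictly inside. Consequently $f$ is convex on $({\beta_{R}}^{-},\beta_{-})$ and concave on $(\beta_{-},{\beta_{R}}^{+})$. When $f'({\beta_{R}}^{-})<0$ --- which includes part of the regime where $\beta^{*}$ is nonetheless interior (this is exactly the jump-discontinuity regime discussed after \Cref{prop:beta_unique}) --- the first interior zero of $f'$ lies in the convex region and is a local \emph{minimum}. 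Your Rolle argument then collapses: the minimum you would need to manufacture between two hypothetical maxima is already one of the two genuine critical points.

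The paper's route avoids this entirely. Rather than trying to sign $f''$ at critical points, it uses the fact that $\beta_{+}$ lies outside the admissible range to conclude that $f''$ has at most one zero there, hence $f'$ has at most two, and of two stationary points at most one can be a local maximum. That counting argument is what you need in place of your deferred key step.
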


\begin{proof}
	We have that the first order condition for $\beta^{*}$ being a local maximum on the interval $\left[\betaint\right]$ is given by 
	\begin{equation}\label{eq:FOC}
		\expyg-\expyr+a + \frac{\alpha(1-\alpha)}{(1-\beta)(1-\beta -\alpha)} -\log\left(\frac{1-\beta-\alpha}{\beta-1}\right)=0. 
	\end{equation}
	It is sufficient to prove that \Cref{eq:FOC} has at most two solutions, as if a function has two stationary points then one must be a local maximum and the other a local minimum. We shall therefore prove that the first derivative of \Cref{eq:FOC} with respect to $\beta$ has exactly one root and thus \Cref{eq:FOC} may cross the axis at most twice. Indeed we have that the first derivative is 
	\begin{equation*}
		\frac{\alpha \left(\alpha^{2} -4 \alpha + 3 + (3 \alpha -4)\beta  + \beta^{2} \right)}{\left(\beta - 1\right)^{2} \left(\alpha + \beta - 1\right)^{2}}.
	\end{equation*}
	This has two roots given by
	\begin{equation*}
		\beta_{\pm} = \frac{1}{2} (4-3 \alpha )\pm \frac{1}{2} \sqrt{5 \alpha ^2-8 \alpha +4}.
	\end{equation*}
	However, this is a decreasing function of $\alpha$ on $[0,1]$, with minimum value $3/2$ at $\alpha=1$. Since $\beta\leq 1$ this is not a valid solution for any $\alpha$. 
\end{proof}

\begin{prop}\label{prop:beta_deriv_a}
    If $\tilde{\beta}$ is a local maximum of $\mathbb{E}_{\epsilon_{R}}[u_{G}(\beta,\epsilon_{R})]$ on $[{\beta_{R}}^{-},{\beta_{G}}^{-}]$, then it is increasing in $a$.
\end{prop}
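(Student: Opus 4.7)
The plan is to apply the implicit function theorem to the first-order condition \Cref{eq:FOC}. Define
\begin{equation*}
F(\beta, a) := a + (\expyg-\expyr) + \frac{\alpha(1-\alpha)}{(1-\beta)(1-\beta -\alpha)} - \log\left(\frac{1-\beta-\alpha}{\beta-1}\right),
\end{equation*}
so that the interior local maximum $\tilde{\beta}$ satisfies $F(\tilde{\beta}, a) = 0$. Both $F$ and its partial derivatives are smooth on the open set $\beta \in (1-\alpha, 1)$ where the logarithm is defined, which contains the subinterval of interest, so implicit differentiation is legitimate.

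First, I would observe that $\partial F/\partial a = 1 > 0$, which is immediate from inspection.

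Second, I would pin down the sign of $\partial F/\partial \beta$ at $\tilde{\beta}$. By construction, $F(\beta, a)$ coincides, up to the positive factor $1/(2a)$, with $\partial_{\beta}\expec{\epsilon_R}{u_G(\beta,\epsilon_R)}$ on the subinterval containing $\tilde{\beta}$ (this is precisely the derivative written down in the proof of \Cref{lem:neg_deriv} and then rearranged to obtain \Cref{eq:FOC}). Therefore $\partial F/\partial \beta$ equals, up to the same positive factor, the second derivative of the expected payoff. The second-order necessary condition at an interior local maximum gives $\partial F/\partial \beta \leq 0$ at $\tilde{\beta}$. The quadratic-root analysis of the numerator of $\partial F/\partial \beta$ carried out in the proof of \Cref{lem:interior_unique} shows that $\partial F/\partial \beta$ has constant sign on the admissible domain for $\beta$; hence the non-positivity is in fact strict, and $\partial F/\partial \beta < 0$ throughout.

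Finally, the implicit function theorem applied to $F(\tilde{\beta}(a), a) = 0$ yields
\begin{equation*}
\frac{d\tilde{\beta}}{da} = -\frac{\partial F/\partial a}{\partial F/\partial \beta} = -\frac{1}{\partial F/\partial \beta} > 0,
\end{equation*}
which is the claimed monotonicity. The main obstacle is the sign argument in the second step: one must combine the local second-order necessary condition at $\tilde{\beta}$ with the global constant-sign property of $\partial F/\partial \beta$ inherited from \Cref{lem:interior_unique}. If $\partial F/\partial \beta$ were allowed to vanish somewhere in the interior, the IFT would fail and a more delicate perturbation argument would be needed; the no-root analysis already carried out in the earlier lemma rules this out and closes the proof without further calculation.
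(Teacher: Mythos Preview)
Your proposal is correct and takes essentially the same route as the paper: implicitly differentiate the first-order condition \eqref{eq:FOC} with respect to $a$, then sign $d\tilde{\beta}/da$ via the second-order condition at a local maximum. The paper is terser and simply asserts the second derivative is negative at $\tilde{\beta}$ without invoking \Cref{lem:interior_unique} for strictness, so your IFT packaging is a refinement of the same idea rather than a different method.
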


\begin{proof}
	We can prove this by differentiating the first order condition for $\beta$ of $\mathbb{E}_{\epsilon_{R}}[u_{G}(\beta,\epsilon_{R})]$ on $[{\beta_{R}}^{-},{\beta_{G}}^{-}]$ with respect to $a$. 
	
	At $\beta^{*}$, we have that
	\begin{align*}
		&\frac{\alpha(1-\alpha)}{(1-\beta^{*})(1-\alpha -\beta^{*})} + \log\left(\frac{\beta^{*}-1}{1-\alpha-\beta^{*}}\right) = \expyg-\expyr-a.\\
		&\Rightarrow \dfrac{\partial}{\partial\beta^{*}}\left(\frac{\alpha(1-\alpha)}{(1-\beta^{*})(1-\alpha -\beta^{*})} + \log\left(\frac{\beta^{*}-1}{1-\alpha-\beta^{*}}\right)\right) \dfrac{\partial \beta^{*}}{\partial a} = -1\\.
	\end{align*}
	But notice that the derivative with respect to $\beta^{*}$ in the line above is, in fact, the second derivative of $\mathbb{E}_{\epsilon_{R}}[u_{G}(\beta,\epsilon_{R})]$ on $[{\beta_{R}}^{-},{\beta_{G}}^{-}]$, evaluated at $\beta_{*}$. For this to be a local maximum, the second derivative must be negative at $\beta^{*}$. Thus $\dfrac{\partial \beta^{*}}{\partial a}$  must be positive. 
\end{proof}

\begin{prop}\label{prop:beta_val_unique}
      If $a<\expyg-\expyr$, or if $\alpha$ is sufficiently large relative to $\expyg-\expyr-a$, then the equilibrium value for $\beta^{*}$ is unique for every value of $a$. Otherwise, the equilibrium is unique for all but a single value of $a$. 
\end{prop}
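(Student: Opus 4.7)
The plan is to reduce the optimisation to at most two candidates for $\beta^{*}$, namely $\beta_{R}^{-}$ and the unique interior critical point $\hat\beta(a)$ (when it exists in $(\beta_{R}^{-},\beta_{R}^{+})$), and then to argue that these two candidates can deliver the same payoff for at most one value of $a$.

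First I would combine \Cref{prop:beta_interval} (which localises $\beta^{*}$ to $[\beta_{R}^{-},\beta_{R}^{+})$) with \Cref{lem:pos_deriv,lem:neg_deriv,lem:interior_unique}. Together these imply that the payoff is weakly increasing on $[\beta_{G}^{-},\beta_{R}^{-}]$; piecewise continuous at $\beta_{R}^{-}$; and has at most one critical point in $(\beta_{R}^{-},\beta_{R}^{+})$, which must be a local maximum since the derivative is strictly negative at $\beta_{R}^{+}$. Hence only $\beta_{R}^{-}$ and a possible interior maximum $\hat\beta(a)$ can be the global optimum.

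Second, I would delineate when $\hat\beta(a)$ fails to exist in $(\beta_{R}^{-},\beta_{R}^{+})$, so that $\beta^{*}=\beta_{R}^{-}$ is globally optimal and unique for every admissible $a$. Using the derivative formula in \Cref{lem:neg_deriv} together with the closed form for $\beta_{R}^{-}$ from \Cref{cor:beta_vals}, and setting $y \equiv y_{G}-y_{R}$, $u \equiv e^{y-a}$, the right-derivative of $\mathbb{E}_{\epsilon_{R}}[u_{G}]$ at $\beta_{R}^{-}$ simplifies to
\[
1-\frac{(u+1-\alpha)(u+1)}{2a\alpha u}.
\]
When this quantity is non-positive the derivative stays non-positive throughout $(\beta_{R}^{-},\beta_{R}^{+})$ (by \Cref{lem:interior_unique}, which rules out a local minimum), so no interior maximum exists. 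I would then verify that either hypothesis of the proposition --- $a<y_{G}-y_{R}$ (which forces $u>1$), or $\alpha$ large relative to $y_{G}-y_{R}-a$ --- suffices for non-positivity.

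Third, in the complementary regime where $\hat\beta(a)$ does exist, I would show that the equation $\mathbb{E}_{\epsilon_{R}}[u_{G}(\hat\beta(a))] = \beta_{R}^{-}$ has at most one root in $a$. The right-hand side is strictly decreasing in $a$ by \Cref{cor:beta_vals}, while for the left-hand side the envelope theorem yields $\tfrac{d}{da}\mathbb{E}_{\epsilon_{R}}[u_{G}(\hat\beta(a),a)] = \partial_{a}\mathbb{E}_{\epsilon_{R}}[u_{G}(\beta,a)]\big|_{\beta=\hat\beta(a)}$; a direct computation on the relevant piece of \Cref{eq:expec_pi_1} shows this partial strictly exceeds $d\beta_{R}^{-}/da$ (informally, widening the shock support slackens the rebel-acceptance constraint that binds at $\beta_{R}^{-}$ but not at $\hat\beta$). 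Hence the difference of the two candidate payoffs is strictly monotone in $a$ and can vanish at most once, giving uniqueness for all but at most a single $a$.

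The main obstacle will be the algebraic verification in the second step, namely that $a<y_{G}-y_{R}$ or large $\alpha$ forces $(u+1-\alpha)(u+1) \geq 2a\alpha u$. The inequality rearranges to a quadratic in $u$ whose discriminant depends delicately on $\alpha$ and $a$, so one must separately exploit the bound $u>1$ in the first case and a direct comparison using $e^{x}\geq 1+x$ in the second to pin down the thresholds. A secondary technical point is justifying that the envelope-theorem partial derivative in the third step is indeed strictly larger than $d\beta_{R}^{-}/da$ without a clean closed form for $\hat\beta(a)$; here I would rely on the FOC to re-express $\hat\beta$-dependent quantities and bound the partial below by the simpler expression $d\beta_{R}^{-}/da$ augmented by a strictly positive correction term.
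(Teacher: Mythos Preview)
Your reduction to the two candidates and the appeal to \Cref{lem:interior_unique} matches the paper. The genuine gap is in your Step~2: the claim that the hypothesis of the proposition forces the right-derivative at $\beta_{R}^{-}$ to be non-positive, and hence precludes an interior maximum, is false. First, your displayed expression for that derivative is incorrect; the computation carried out in the proof of \Cref{thm:non_mono} gives
\[
1-\frac{(1-\alpha)(1+u)^{2}}{2a\,\alpha\, u},\qquad u=e^{\,y_{G}-y_{R}-a},
\]
not $1-(u+1-\alpha)(u+1)/(2a\alpha u)$. Second, and more importantly, the sign of this quantity is \emph{not} governed by the hypothesis. Take $y_{G}-y_{R}=2$, $\alpha=0.55$, $a=1.9$: then $a<y_{G}-y_{R}$ so the first branch of the hypothesis holds, yet the derivative at $\beta_{R}^{-}$ is strictly positive (and $a<a_{\mathrm{crit}}\approx 3.5$), so an interior local maximum exists and genuinely competes with $\beta_{R}^{-}$. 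The regime switch of \Cref{thm:non_mono} can and does occur inside the parameter region covered by the hypothesis; the hypothesis is \emph{not} a criterion for the absence of $\hat{\beta}$.

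What the paper actually extracts from the hypothesis is different. It shows that $a<y_{G}-y_{R}$ (or $\alpha$ large) is exactly the condition $e^{\,y_{G}-y_{R}-a}>\sqrt{1-\alpha}$, equivalently $\beta_{R}^{-}>\sqrt{1-\alpha}$. This lower bound is then used, not to suppress $\hat{\beta}$, but to prove that after substituting the FOC for $\hat{\beta}$ into the tie condition $\mathbb{E}[u_{G}(\hat{\beta})]=\beta_{R}^{-}$, the resulting left-hand side
\[
\alpha\log\!\left(\frac{1}{1-\beta}\right)-\frac{\alpha(1-\alpha)\,\beta}{(1-\beta)(1-\alpha-\beta)}
\]
is monotone in $\beta$ on $[\beta_{R}^{-},\beta_{R}^{+})$. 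Combined with the observation (from \Cref{prop:beta_deriv_a} and \Cref{cor:beta_vals}) that $\hat{\beta}(a)-\beta_{R}^{-}(a)$ is itself monotone in $a$, this delivers the ``exactly one root, precisely at $\hat{\beta}=\beta_{R}^{-}$'' conclusion, hence uniqueness for every $a$. Your Step~3 envelope argument is aimed at the right object (the payoff gap as a function of $a$) but, because of the mis-partition in Step~2, would in fact have to carry the entire load in the region where the hypothesis holds; and you would still need to identify the role of the threshold $\sqrt{1-\alpha}$ in order to separate the two clauses of the proposition.
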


\begin{proof}
     We begin by noticing that $\beta_{R}^{-}$ satisfies the first order conditions on $\mathbb{E}_{\epsilon_{R}}[u_{G}(\beta,\epsilon_{R})]$ for at most one value of $a$. From \Cref{cor:beta_vals} we know that ${\beta_{R}}^{-}$ is decreasing in $a$. We know from \Cref{prop:beta_deriv_a} we know that if $\tilde{beta}$ is a local maximum then it is monotonically increasing in $a$. Thus, we have that $\tilde{\beta}-{\beta_{R}}^{-}$ is a monotonically increasing function in $a$ and has at most one root.

	From \Cref{prop:beta_interval}, we know that $\beta_{*}$ must lie in $\left[\betaint\right]$, where we may consider the closed interval without loss of generality as our function is piecewise continuous. 
	
	We know from \Cref{lem:interior_unique} that any local maximum value on this interval if it exists, is unique. Moreover, we know that $\beta_{R}^{+}$ is strictly dominated in some open neighbourhood $\left(\tilde{\beta},{\beta_{R}}^{+}\right)$ in the interval as expected payoff is strictly negative at ${\beta_{R}}^{+}$.
	
	Thus the choice of $\beta^{*}$ will be unique if the expected payoff at the local maximum is never equal to the expected payoff at ${\beta_{R}}^{-}$, unless the local maximum occurs at the boundary. To do so we show that if $\beta^{*}$ is a local maximum, then 
	\begin{equation*}
		\mathbb{E}[u_{G}(\beta^{*}, \epsilon_{R})] - {\beta_{R}}^{-}  
	\end{equation*}
	has exactly one root, precisely when $\beta^{*} = {\beta_{R}}^{-}$. Indeed, we begin by noticing that for any $\beta^{*}$ on the interval $\left[\betaint\right]$ we have that the payoff for the government is given by
	\begin{align*}
		\mathbb{E}[u_{G}(\beta^{*})] &= \frac{\alpha}{2a}\left(\log\left(\frac{1-{\beta_{R}}^{-}}{1-\beta^{*}}\right)\right) + \frac{\beta^{*}}{2a}\left(a+\expyg-\expyr - \log\left(\frac{1-\alpha-\beta^{*}}{\beta^{*}-1}\right)\right).
	\end{align*}
	We may use the first order conditions on $\beta^{*}$, 
	\begin{equation}
		\frac{\partial \mathbb{E}[u_{G}(\beta)]}{\partial \beta}\bigg|_{\beta^{*}} =\frac{1}{2a}\left(\frac{\alpha(1-\alpha)}{(1-\beta^{*})(1-\alpha-\beta^{*})} + a + \expyg-\expyr - \log\left(\frac{1-\alpha - \beta^{*}}{\beta^{*}-1}\right) \right) = 0. 
	\end{equation}
	to simplify the second term so that
	\begin{equation}
		\mathbb{E}[u_{G}(\beta)] = \frac{\alpha}{2a}\left(\log\left(\frac{1-{\beta_{R}}^{-}}{1-\beta^{*}}\right)\right) - \frac{\beta^{*}}{2a}\left(\frac{\alpha(1-\alpha)}{(1-\beta^{*})(1-\alpha-\beta^{*})}\right).
	\end{equation}
	Thus, we must have that	
	\begin{align*}
		{\beta_{R}}^{-} &= \frac{\alpha}{2{a}}\left(\log\left(\frac{1-{\beta_{R}}^{-}}{1-\beta^{*}}\right)\right) - \frac{\beta^{*}}{2{a}}\left(\frac{\alpha(1-\alpha)}{(1-\beta^{*})(1-\alpha-\beta^{*})}\right).
	\end{align*}

	Rearranging the above gives us
	\begin{equation*}
		\alpha \log\left(\frac{1}{1-\beta^{*}}\right) -\beta^{*}\frac{\alpha(1-\alpha)}{(1-\beta^{*})(1-\alpha-\beta^{*})} = 2a {\beta_{R}}^{-} - \alpha\log\left(\frac{1}{1-{\beta_{R}}^{-}}\right)  .
	\end{equation*}
	We now wish to show that for any given value of ${\beta_{R}}^{-}$, there is at most one value of $\beta^{*}$ which satisfies this equation. We do so by showing that the left-hand side of the equation is monotonic in $\beta^{*}$. Indeed, we have that 
	\begin{equation*}
		-\alpha\dfrac{\partial}{\partial \beta^{*}}\log(1-\beta^{*})
		-\frac{\partial}{\partial\beta^{*}}\left(\frac{\beta^{*}\alpha(1-\alpha)}{(1-\beta^{*})(1-\alpha-\beta^{*})}\right) = \frac{\alpha(1-\beta^{*})(\alpha +\beta^{*} -1)^2 + (1-\alpha) \alpha  \left(\alpha +(\beta^{*})^2 - 1\right)}{(\beta^{*} -1)^2 (\alpha +\beta^{*} -1)^2}.
	\end{equation*}
	
	It is clear that this is positive when $\left(\alpha +(\beta^{*})^2-1\right)>0$, which occurs when $\beta^{*} > \sqrt{1-\alpha}$. However, recall that we are studying $\beta^{*} \in \left[\betaint\right]$. Thus it is sufficient to prove that ${\beta_{R}}^{-} > \sqrt{1-\alpha}$:
	\begin{align*}
		1-\frac{\alpha}{1+e^{\expyg-\expyr-a}} &> \sqrt{1-\alpha}\\
		\iff \frac{2\alpha}{1+e^{\expyg-\expyr-a}} - \frac{\alpha^2}{\left(1+e^{\expyg-\expyr-a}\right)^{2}} &< \alpha\\
		\iff - \left(1+e^{\expyg-\expyr-a}\right)^{2} + 2 \left(1+e^{\expyg-\expyr-a}\right) -\alpha &< 0 \\
		\iff (e^{\expyg-\expyr-a}+\sqrt{1-\alpha})(e^{\expyg-\expyr-a}-\sqrt{1-\alpha}) > 0.
	\end{align*}
	The above inequality is only satisfied if $e^{\expyg-\expyr-a}<-\sqrt{1-\alpha}$ (which is never possible) or $e^{\expyg-\expyr-a}>\sqrt{1-\alpha}$. However, this is certainly true if $\expyg-\expyr-a>0$, but also true if $1-e^{2*(\expyg-\expyr-a)} < \alpha$.
\end{proof}

\Cref{prop:beta_unique} follows immediately from \Cref{prop:beta_interval} and \Cref{prop:beta_val_unique}.

\subsubsection{Proof of \Cref{thm:non_mono}}
\begin{proof}
        \emph{Strategy switch in increasing $a$:} In the proof of \Cref{prop:beta_val_unique}, we show that if $\tilde{\beta}$ is an interior equilibrium, then $\beta_{R}^{-} - \mathbb{E}_{\epsilon_{R}}[u_{G}(\tilde{\beta},\epsilon_{R})]$ is decreasing in $\tilde{\beta}$. Because $\tilde{\beta}$ is increasing in $a$, it remains to show that $\mathbb{E}_{\epsilon_{R}}[u_{G}(\tilde{\beta},\epsilon_{R})] > \beta_{R}^{-}$ for some lower bound on $a$.
        
	To prove this, it is sufficient to show that 
	\begin{equation*}
		\dfrac{\partial}{\partial \beta}\mathbb{E}_{\epsilon_{R}}[u_{G}(\beta,\epsilon_{R})]|_{{\beta_{R}}^{-}} >0;
	\end{equation*}
	for at least some of the parameter space.
	
	We have that 
	\begin{align*}
		&\dfrac{\partial}{\partial \beta}\mathbb{E}_{\epsilon_{R}}[u_{G}(\beta,\epsilon_{R})]|_{{\beta_{R}}^{-}} \\
		&= \left(\frac{\alpha}{1-{\beta_{R}}^{-}}\right)\left(\frac{1-\alpha}{1-\alpha-{\beta_{R}}^{-}}\right) + a+\expyg-\expyr + \log\left(\frac{{\beta_{R}}^{-}-1}{1-\alpha-{\beta_{R}}^{-}}\right).
	\end{align*}
	
	We may simplify the above by noticing that
	\begin{align*}
		(1-{\beta_{R}}^{-})(1-\alpha-{\beta_{R}}^{-}) &=-\left(\frac{\alpha}{1+e^{\expyg-\expyr-a}}\right)\left(\frac{\alpha}{1+e^{\expyr-\expyg+a}}\right)\\
		&=-\frac{\alpha^{2}e^{\expyr-\expyg-a}}{\left(1+e^{\expyg-\expyr-a}\right)^{2}}.
	\end{align*}
	
	Similarly:
	\begin{align*}
		\frac{{\beta_{R}}^{-}-1}{1-{\beta_{R}}^{-}-\alpha} &= \left(-\frac{\alpha}{1+e^{\expyg-\expyr-a}}\right)\left(-\frac{1+e^{\expyr-\expyg+a}}{\alpha}\right)
		&= e^{\expyr-\expyg+a}.
	\end{align*}
	
	Thus, we have that:
	\begin{align*}
		&\dfrac{\partial}{\partial \beta}\mathbb{E}_{\epsilon_{R}}[u_{G}(\beta,\epsilon_{R})]|_{{\beta_{R}}^{-}}\\
		&= \frac{\alpha (1-\alpha)}{-\alpha^{2}}\frac{\left(1+e^{\expyg-\expyr-a}\right)^{2}}{e^{\expyr-\expyg-a}} + a+\expyg-\expyr+\log\left(e^{\expyr-\expyg+a}\right)\\
		&= \frac{(1-\alpha)}{-\alpha}\frac{\left(1+e^{\expyg-\expyr-a}\right)^{2}}{e^{\expyr-\expyg-a}} + 2a.
	\end{align*}
	
	And so the derivative is positive at the boundary if
	\begin{align*}
		2a>\frac{(1-\alpha)}{\alpha}\frac{\left(1+e^{\expyg-\expyr-a}\right)^{2}}{e^{\expyr-\expyg-a}}.
	\end{align*} 
	
	To show that this condition is satisfied, we need to find the values of the parameter space for which this is true.  
	
	To this end, let $x = e^{\expyg-\expyr-a}$. We wish to solve for the values of $x$ such that
	\begin{align*}
		{2a}&>\frac{(1-\alpha)}{\alpha}\frac{\left(1+x\right)^{2}}{x}\\
		\iff 0 &> (1-\alpha)x^{2}+\left(1-2\alpha(1-a)\right)x +(1-\alpha).
	\end{align*}
	The roots of the polynomial on the RHS of the inequality are given by
	\begin{align*}
		x_{+,-} &= \frac{\pm \sqrt{a\alpha}\sqrt{\alpha(a+2) -2} + (a\alpha +\alpha -1)}{1-\alpha}\\ 
		&= \frac{\pm \sqrt{(\alpha(a+1)-1)^{2}-(\alpha^{2}+1)}+ (\alpha(a+1)-1)}{1-\alpha}.
	\end{align*}
	
	For any real roots to exist, we need that 
	\begin{equation*}
		\alpha(a+2)\geq 2 \iff \alpha \geq \frac{2}{a+2}. 
	\end{equation*}
	
	Notice that
	\begin{equation}\label{eq:real_root}
		\frac{2}{a+2} \geq \frac{1}{a+1},
	\end{equation}
	if $a>0$.
	
	The polynomial is concave up. So, if  $\alpha(a+1)<1$ then there are no positive roots- and thus the function is positive $\forall x\geq 0$. But $\alpha(a+1)<1$ is impossible by \ref{eq:real_root}. If $\alpha(a+1)>1$ then both roots are positive: so the function is negative on $x_{-}<x<x_{+}$. Taking a logarithm across this inequality gives the desired upper and lower bounds:
	\begin{align*}
		\log\left(-\frac{\sqrt{\eta^{2}-(\alpha^{2}+1)} + \eta}{1-\alpha}\right)+ a \leq (\expyg-\expyr) \leq \log\left(\frac{\sqrt{\eta^{2}-(\alpha^{2}+1)} + \eta}{1-\alpha}\right)+a
	\end{align*}
	Where $\eta = (\alpha(a+1)-1)$.		
    \emph{Non-monotonicity:} this follows immediately from \cref{cor:beta_vals} and \cref{prop:beta_deriv_a}.
\end{proof}

\subsubsection{Proof of \Cref{prop:prob_war}}
\begin{proof}
	We have that the probability of war, in terms of $\beta^{*}$, is given by the probability that $R$ wants to fight:
	\begin{equation*}
		\frac{1}{2a} \left(a-y_{G}+y_{R} + \log\left(\frac{\alpha}{1-\beta^{*}}-1\right)\right).
	\end{equation*}
	Differentiating the above with respect to $a$ gives
	\begin{equation}
		-\frac{1}{2a^{2}}\left(a -y_{G} + y_{R} + log\left(\frac{\alpha}{1-\beta}-1\right)\right) + \frac{1}{2a}\left(1+ \frac{\alpha}{(1-\beta^{*})(\alpha + \beta^{*} -1)}\dfrac{\partial \beta^{*}}{\partial a}\right).
	\end{equation}
	The first term is positive as 
	\begin{align}
		log\left(\frac{\alpha}{1-\beta^{*}}-1\right) \geq  log\left(\frac{\alpha}{1-\beta_{R}^{-}}-1\right) = y_{G}-y_{R}-a,
	\end{align}
	where the inequality follows from the fact that $\beta^{*} \geq \beta_{R}^{-}$.
	
	The second term is positive as ${\partial \beta^{*}}/{\partial a}>0$ by \Cref{prop:beta_deriv_a} and we also have that ${\alpha}/((1-\beta^{*})(\alpha + \beta^{*} -1)) \geq 0$ as
	\begin{align*}
		 \alpha + \beta^{*} \geq \alpha + \beta_{R}^{-} = 1 +\alpha \frac{\alpha}{1+e^{y_R-y_G-a}} >1,
	\end{align*}
	so that $\alpha + \beta^{*} -1 >0$.
\end{proof}

\subsubsection{Proof of \Cref{prop:welfare_decreasing}}

\begin{proof}
	The proof for when $a \leq a_{crit}$ is trivial as no fighting occurs and total welfare is equal to $1$. When $a>a_{crit}$ then we have that total welfare is given by $1-\mathbb{P}(\epsilon_{R}>t_{R}(\beta^{*}))(1-\alpha)$. Since \Cref{prop:prob_war} shows that $\mathbb{P}(\epsilon_{R}>t_{R}(\beta^{*}))$ is increasing in $a$, the result follows immediately.
\end{proof}

        \bibliographystyle{agsm}
        \bibliography{Bibliography}
    	
\end{document}